\newlength{\dinwidth}
\newlength{\dinmargin}
\newtheorem{Definition}{Definition}[section]
\newtheorem{Theorem}[Definition]{Theorem}
\newtheorem{Proposition}[Definition]{Proposition}
\newtheorem{Lemma}[Definition]{Lemma}
\newcommand*\dif{\mathop{}\!\mathrm{d}}
\newcommand*\Dif[1]{\mathop{}\!\mathrm{d^#1}}
\renewcommand{\theequation}{\thesection.\arabic{equation}}
\pgfplotsset{compat=newest}
\title{\Large \bf An approximate local modular quantum energy inequality in general quantum field theory}
\author[1]{Albert Much}
\author[1,2]{Albert Georg Passegger}
\author[1]{Rainer Verch}
\affil[1]{Institute for Theoretical Physics, University of Leipzig, Germany}
\affil[2]{Max Planck Institute for Mathematics in the Sciences, Leipzig, Germany}
\date{}
\begin{document}

\large
\maketitle

\begin{abstract}
	\noindent For every local quantum field theory on a static, globally hyperbolic spacetime of arbitrary dimension, assuming the Reeh-Schlieder property, local preparability of states, and the existence of an energy density as operator-valued distribution, we prove an approximate quantum energy inequality for a dense set of vector states. The quantum field theory is given by a net of von Neumann algebras of observables, and the energy density is assumed to fulfill polynomial energy bounds and to locally generate the time translations. While being approximate in the sense that it is controlled by a small parameter that depends on the respective state vector, the derived lower bound on the expectation value of the spacetime averaged energy density has a universal structure. In particular, the bound is directly related to the Tomita-Takesaki modular operators associated to the local von Neumann algebras. This reveals general, model-independent features of quantum energy inequalities for a large class of quantum field theories on static spacetimes.
\end{abstract}

\section{Introduction}
\label{sec:intro}

In classical (macroscopic) physics, and in particular in general relativity, the energy density of matter is typically non-negative at any point
in spacetime. At a more technical level, this is expressed through 
describing matter by means of a stress-energy tensor field on spacetime
(more commonly, by a stress-energy tensor, for short) on which various 
types of energy conditions are imposed. Such energy conditions are important
in order to ensure that gravity, as formulated in Einstein's equations,
acts always as an attractive force, i.e.\ leads to geodesic focussing. In turn, this is 
crucial for the validity of singularity theorems in general relativity, and also for the ability to rule out certain causal pathologies in solutions
to Einstein's equations of gravity. We will not attempt any review 
or representative selection of  the rich 
amount of literature on the topic and instead just refer to the 
references \cite{Cur,KonSan,Lob} for an overview and review on the subject. 

On the other hand, it is known that in quantum field theory the energy 
density at any given spacetime point is unbounded below as a functional of 
physical quantum states, even if the total energy, i.e.\ the energy density integrated over a Cauchy surface, is non-negative for any quantum field state. This property of the energy density in quantum field theory has
been established both in quantum field models as well as on general grounds.
However, it has been found that suitable spacetime averages of the energy 
density in quantum field theory or, more technically, integration of 
the energy density against smooth, non-negative test functions, leads to quantities 
(expectation value functionals) which are bounded below, in such a manner that
the bound does not depend on the states (with respect to which the expectation values are formed). 
We will next mention some investigations and results to this effect, however again without attempting a fair review of the literature; considerably more discussion to this 
end is given in \cite{Few-QEIRev-FTPh,KonSan} and references cited there.\medskip

The object of study is a quantum field theory with a stress-energy observable, which
in general is a tensorial operator-valued distribution. Its expectation value for 
a suitable class of states is, somewhat symbolically, denoted by $\langle T_{\mu \nu}(x)\rangle_\sigma$ where $\sigma$ labels a state, and $x$ is a spacetime point. 
This sloppy notation for a distribution is common and justified since, under general
conditions, the expectation value is actually a smooth function on spacetime for 
sufficiently regular states $\sigma$. For many types of quantum fields that 
are subject to a linear hyperbolic equation of motion, or a Dirac-type equation,
there are good candidates for the expectation value of the stress-energy tensor, in flat as well in general curved spacetimes, for a class of states $\sigma$ that 
fulfill the microlocal spectrum condition, which can be seen as a generalization 
of the Hadamard condition on quantum field states \cite{Rad,BruFreKoe,SahVer}. We won't review the matter of definition of the quantum stress-energy tensor on curved spacetime here and refer to \cite{Wald-QFTCST,Mor} for further discussion. 
For linear quantum fields on curved spacetimes, it has been shown 
that a {\it quantum weak energy inequality} (QWEI)
holds, i.e.\ if $\gamma$ is a smooth timelike curve, and if 
$\langle \varrho(t) \rangle_\sigma = \langle T_{\mu \nu}(\gamma(t))\rangle_\sigma\dot{\gamma}^\mu(t)
\dot{\gamma}^\nu(t)$ denotes the expected energy density along the curve in state $\sigma$, then an estimate of the type
\begin{align}
 \inf_\sigma \int f^2(t)\, \langle \varrho(t) \rangle_\sigma \dif t \ge -C(\gamma,f)
\end{align}
holds for real-valued, smooth, compactly supported test functions $f$ along $\gamma$
\cite{Few2000,FV-Dirac,FewKon}.
The crucial point is that the right hand side is potentially negative, but finite
for every $f$ and every $\gamma$.
This finding comes with a caveat, though, in that it may depend on the type of 
field equation. For instance, it has been found that the non-minimally coupled 
linear scalar field in general does not admit a bound of this type. In fact, 
for such quantum fields, a weaker statement, called a {\it relative quantum weak energy inequality} (rQWEI) holds, of the type 
\begin{align}
 \int f^2(t)\, \langle \varrho(t) \rangle_\sigma \dif t \ge -Q_\sigma(\gamma,f) \, ,
\end{align}
where the functional $Q_\sigma(\gamma,f) \ge 0$ provides a lower bound in the given form, but not
an upper bound on the integral on the left hand side as $\sigma$ ranges over 
the set of all Hadamard states for fixed $\gamma$ and $f$  
\cite{FewOst}. 
Limiting behaviour of the QWEI has also been investigated, for the case that 
the function $f$ approaches the constant value $1$ all along the curve 
$\gamma$. If, in this case, the lower bound on the resulting expression is 
$0$ for all states contemplated, one speaks of an {\it averaged weak energy condition} (AWEC) if $\gamma$ is a complete timelike
geodesic, and of an {\it averaged null energy condition} (ANEC) if $\gamma$ is a complete
null geodesic. For free quantized fields on Minkowski spacetime, AWEC results 
can be obtained as limiting cases of the QWEI \cite{ForRo}; however it has been
pointed out that the AWEC may fail in the presence of negative static Casimir 
energies as they occur near domain walls, which may indirectly be taken as 
a failure of AWEC in interacting quantum field theories possessing bound states 
which may serve as domain walls \cite{GraOlum}. Results on the ANEC have been obtained
for free fields in Minkowski spacetime \cite{Kli,Wald-Yur,Fol} and also for interacting 
quantum fields in two-dimensional Minkowski spacetime \cite{FewHol,Ver-ANEC}. 
In higher dimensional Minkowski spacetime, lower bounds on the null energy density 
of the expected stress-energy tensor, including ANEC, have been connected to bounds on entropy-like quantities for certain spacetime regions in general quantum field theories \cite{Wall,BFKLW,FLPW,Lon,CeyFau,MoTaWe,CaGriPo}. From the perspective of our present contribution, there is some potential link since methods from the Tomita-Takesaki modular theory of von Neumann algebras \cite{Tak,BraRob} are used in the mentioned works. One of the prominent appearances of the Tomita-Takesaki modular theory in general quantum field theory is via the theorems of Bisognano-Wichmann and Borchers on the geometric 
action of Tomita-Takesaki modular objects for operator algebras of observables 
associated to certain (wedge) regions and the vacuum vector \cite{BiWi,Bor92,
Bor-Revol,Haag}. Tomita-Takesaki modular objects of local observable algebras relative to a vacuum state
also make an appearance in the main result of the present work, however without 
need for their geometric action. 

While there are numerous results on QWEIs and also averaged energy inequalities for
linear quantum field theories, including such on general spacetimes, there is 
apparently little on locally averaged lower bounds on energy expectation values 
for general and interacting quantum field theories. The results in this direction
so far have been relatively sparse \cite{BosFew,BosCadFew,Cad-QEI-Int-Rev}, reflecting the difficulty of obtaining
local energy expressions permitting useful bounds for interacting quantum fields. Moreover, it is unclear 
what types of locally averaged energy inequalities one may hope to expect. Certainly
one wouldn't expect locally spatially averaged energy quantities to hold as they 
fail already for linear quantum fields in physical spacetime dimension
\cite{ForHelRo}. 
One may even be skeptical about rQWEIs in the light of the arguments of \cite{GraOlum}, even though it is not clear how relevant they are at a formalized mathematical level. As pointed out in \cite{FewPfe}, lower bounds on locally spacetime averaged 
energy expectation values have a potentially larger domain of applicability. This 
is basically also the starting point we adopt for the present work. \medskip

A first step, and difficulty to some extent, is to settle for definite assumptions for a general
quantum field theory with a stress-energy tensor, or a bit more specifically, an energy density. To this end, we follow in this work largely the approach of 
\cite{Ver-ANEC} and \cite{FV-Pass}. We consider a quantum field theory in the operator algebraic setting on a static, globally hyperbolic spacetime. Standard assumptions are made, such as commutativity of local observables at causal separation, existence of a ground state for the time translations, the Reeh-Schlieder property for local algebras, and local preparability of states (which is related to the split property and the type III property of local algebras \cite{Wer-LocPreSplit,Yng-typeIII,Sum,Few-split-rev}). These assumptions have been shown to be fulfilled for a large class of linear quantum field theories (see Section \ref{sec:assump} for a discussion). Furthermore, it will be assumed that there is an operator-valued
distribution $\boldsymbol{\varrho}(F)$, $F \in C_0^\infty(M)$, on a suitable common dense domain, affiliated with the local observable algebras, fulling polynomial ``$H$-bounds'', where $H$ denotes the Hamilton operator of the quantum field theory, and generating the time translations locally. The latter means that if $G$ is a test function of a certain type, and $A$ is a 
local observable with a suitable localization relative to $G$, then $[\boldsymbol{\varrho}(G),A] = [H,A]$. (A fully rigorous formulation of the property will be given in Section \ref{sec:setting}. As usual, $[X,Y] = XY -YX$ denotes the algebraic commutator.) Using these assumptions, we reach at the following main result. 
Let $\Omega$ denote the unit vector inducing the ground state, and let $O^\sharp$ be
a spacetime region which is suitably larger than ${\rm supp}(G)$. Then for every 
unit vector $\psi$ in the common dense domain of all energy density operators,
it holds that for every $\epsilon > 0$ there is some $\lambda_0 > 0$ so that 
\begin{align} \label{eq:qei.}
 (\psi,\boldsymbol{\varrho}(G)\psi) \ge - \epsilon - \sqrt{2\pi}\,\| {\rm e}^{-(\lambda K_\sharp)^2/2}
 \Delta^{-1/2}_\sharp \boldsymbol{\varrho}(G)\Omega\|
\end{align}
for all $0 < \lambda < \lambda_0$, where 
$\Delta_\sharp = {\rm e}^{K_\sharp}$ denotes the Tomita-Takesaki modular operator 
associated to the local observable algebra ${\sf A}(O^\sharp)$ and $\Omega$; note that $\Delta_\sharp^{-1/2} = {\rm e}^{-K_\sharp/2}$.
The round brackets are used for the scalar product of the ambient Hilbert space (the ground state Hilbert space). The number $\lambda_0 > 0$ depends on the state vector $\psi$ and on $\epsilon$.
The dependence on $\lambda_0$ means that the bound \eqref{eq:qei.} is not a state-independent quantum inequality as in the QWEIs for linear quantum fields. It 
has more the character of a rQWEI, where the state dependence enters through the 
dependence on $\lambda_0$. It should be noted that $\boldsymbol{\varrho}(G)\Omega$
cannot be expected to lie in the domain of $\Delta^{-1/2}_\sharp$; a result to that effect will be provided in Appendix \ref{appendix-B}. Thus, the expression on the right hand side of \eqref{eq:qei.} will diverge to $-\infty$ for $\lambda \to 0$. What is of interest, however, is the ``universal'' appearance of the expression, which not only is independent of the state vector $\psi$, but actually of any details of the energy density, and the way it is tied to the Tomita-Takesaki modular operators associated to local von Neumann algebras of the quantum field theory and the ground state vector. \medskip

This work is organized as follows. In Section \ref{sec:setting} we introduce our setup of a quantum field theory on a static, globally hyperbolic spacetime in operator algebraic language. In particular, we define the concept of an energy density, subject to certain (physically motivated) conditions. Section \ref{sec:assump} comprises a discussion of the imposed assumptions with regard to their generality and validity in quantum field theory models, as well as a standard approximation result that will be important in the proof of the main theorem in Section \ref{sec:aqei}. In Theorem \ref{Thm:First} we first present a preliminary, simple quantum energy inequality for the locally averaged energy density with respect to non-dense bounded sets of unit vectors. The main result, Theorem \ref{Thm:3rd}, constitutes a refined version of this inequality with a lower bound for a dense set of state vectors, as sketched above. This result is followed by a concluding discussion in Section \ref{sec:discussion}. Two technical appendices appear after the main body of the article.

\section{Setting}
\label{sec:setting}

We consider a quantum field theory on a $(1+d)$-dimensional static,  
globally hyperbolic spacetime $(M,g)$ with manifold $M = \mathbb{R} \times \Sigma$,
where $\Sigma$ is a $d$-dimensional manifold. Points in $M$ will generically be denoted by $(t,p)$ with $t \in \mathbb{R}$ and $p \in \Sigma$. The metric $g$ on the spacetime is given by
\begin{align}
	\label{eq:metric}
	g = \alpha\dif t^2 \oplus (-h) \, ,
\end{align}
where $h$ is a Riemannian metric on $\Sigma$ and $\alpha$ is a smooth, strictly positive function on $\Sigma$. The level sets of the time function $(t,p) \mapsto t$, i.e.\ the sets $\Sigma_t = \{t\} \times \Sigma$ $(t \in \mathbb{R})$, are assumed to be Cauchy surfaces. There is a Killing flow $\{\tau_s\}_{s \in \mathbb{R}}$ on $M$ associated to the global timelike Killing vector field $\partial_t$, consisting of the time shift isometries $\tau_s(t,p) = (t + s,p)$. We also recall the following notation. For any subset $T \subset M$, 
$J^\pm(T)$ are the causal future ($+$) and causal past ($-$) sets of $T$, respectively. The {\it domain of dependence} (also called Cauchy development) $D(T)$ of $T$ consists of those points for which all future- or past-inextendible causal curves through them intersect $T$. Finally, we call $T$ {\it causally convex} if it agrees with its causal hull $J^+(T)\cap J^-(T)$, i.e.\ if every causal curve with endpoints in $T$ is contained in $T$. We refer to \cite{ONeill,Wald-GR} for further background on Lorentzian geometry and discussion of these concepts. 
\medskip

The quantum field theory on a static spacetime $(M,g)$ is described in the 
model-independent, operator algebraic framework \cite{BauWo,Haag,FV-Pass}. It is assumed that there
is a family of von Neumann algebras $\{{\sf A}(O)\}_{O \subset M}$ on a separable Hilbert space $\mathcal{H}$, indexed by the open 
subsets $O$ of $M$. We list below the properties we will impose on our quantum field theory, while the next section will contain a discussion as to how general these assumptions are, and the extent to which they are proven, or can be expected to hold, in specific quantum field theory models. We also present some further consequences of the assumptions. 
\begin{itemize}
 \item[(a)] {\it Isotony}: $O_1 \subset O_2 \Rightarrow {\sf A}(O_1) \subset {\sf A}(O_2)$
 \item[(b)] {\it Locality}: $O_1 \subset O^\perp \Rightarrow {\sf A}(O_1) \subset {\sf A}(O)'$
 \\
 Here, $O^\perp = M \setminus (J^+(O) \cup J^-(O))$ denotes the {\it causal complement} of $O$, i.e.\ the set of all spacetime points which cannot be connected to $O$ by any causal curve. Furthermore, for any ${\sf N} \subset \mathcal{B}(\mathcal{H})$, the commutant of ${\sf N}$ is denoted by ${\sf N}' = \{X \in \mathcal{B}(\mathcal{H}): XN = NX \ \text{for all} \ N \in {\sf N} \,\}$.
 
 \item[(c)] {\it Covariance}: There is a strongly continuous one-parameter unitary group $\{U_t\}_{t \in \mathbb{R}}$ on $\mathcal{H}$ so that $U_t {\sf A}(O)U_t^{-1} = {\sf A}(\tau_t(O))$ holds for all $t \in \mathbb{R}$ and all open subsets $O$ of $M$. 
 \item[(d)] {\it Existence of a ground state}: There is a unit vector $\Omega \in \mathcal{H}$ so that $U_t\Omega = \Omega$ for all $t \in \mathbb{R}$. Moreover, denoting by $H$ the ``Hamiltonian'',  that is, the selfadjoint generator of $\{U_t\}_{t \in \mathbb{R}}$ given by $U_t = {\rm e}^{itH}$, it holds that ${\rm spec}(H) \subset [0,\infty)$ (the spectrum of $H$ contains no negative values).
 \item[(e)] {\it Reeh-Schlieder property}: If $O$ is any non-empty open subset of $M$,
 then $\Omega$ is cyclic for ${\sf A}(O)$, meaning that the set 
 ${\sf A}(O)\Omega = \{ A\Omega: A \in {\sf A}(O)\}$ is dense in $\mathcal{H}$. This implies that $\Omega$ is also separating for ${\sf A}(O)$ whenever there is an non-empty open subset $O_1$ of $M$ with $O_1 \subset \overline{O^\perp}$; we recall that $\Omega$ is called separating for ${\sf A}(O)$ if 
 for every $A \in {\sf A}(O)$ the equation $A\Omega = 0$ implies $A = 0$. 
 \item[(f)] {\it Local preparability of states}: If $O$ and $O_1$ are causally convex open subsets of $M$ with $\overline{O_1} \subset O$, then for every unit vector $\psi \in \mathcal{H}$ there is some $Y \in {\sf A}(O)$ with $\| Y \| = 1$ such that 
\begin{align}
 (\psi, A \psi) = (Y\Omega,A Y\Omega) \quad (A \in {\sf A}(O_1))\,.
\end{align}
\end{itemize}
It will be helpful to introduce further notation. For $f \in \mathscr{S}(\mathbb{R})$
(the Schwartz functions) we write
\begin{align} \label{eq:convol}
u_f(A) = \int_{-\infty}^\infty f(t) U_t A U_t^{-1} \dif t
\end{align}
whenever $A \in \mathcal{B}(\mathcal{H})$. We then write ${\sf A}_\infty(O)$ for the 
$*$-subalgebra of ${\sf A}(O)$ formed by finite polynomials of 
elements of the form $u_f(B)$ (required to be in ${\sf A}(O)$) for $f\in\mathscr{S}(\mathbb{R})$ and $B \in {\sf A}(O)$. Note that 
\begin{align}
\left. \frac{\dif}{\dif t} U_t u_f(B) U_t^{-1}\right|_{t = 0} = i[H,u_f(B)] = u_{\dot{f}}(B) 
\end{align}
where $\dot{f} = \dif f/\dif t$. Therefore, if $A \in {\sf A}_\infty(O)$, then 
$H^n A\Omega$ lies in $\mathcal{H}$ for every $n \in \mathbb{N}$. Put differently, $A\Omega$
is in the $C^\infty$-domain of $H$ whenever $A \in {\sf A}_\infty(O)$. 
\medskip

The further important assumption we make is that there is an energy density given
by a quantum field which generates the derivation of the Hamiltonian $H$ locally, satisfies
a polynomial $H$-bound and is affiliated with the local von Neumann algebras ${\sf A}(O)$.
In more detail:

\begin{itemize}
\item[(A)] We assume that for any $F \in C_0^\infty(M)$, there is a linear operator 
$\boldsymbol{\varrho}(F)$, depending linearly on $F$, defined on
a common dense domain $\mathcal{D} \subset \mathcal{H}$, and that $\boldsymbol{\varrho}(F)$
is essentially selfadjoint on $\mathcal{D}$ if $F$ is real-valued. 
The domain $\mathcal{D}$ is assumed to be invariant under the action of the $\boldsymbol{\varrho}(F)$ and the $U_t$, and to contain $\Omega$. 
Moreover, covariance
will be assumed: 
\begin{align}
	\label{eq:covariance-rho}
	U_t \boldsymbol{\varrho}(F) U_t^* = \boldsymbol{\varrho}(F \circ \tau_{-t}) \quad (F \in C_0^\infty(M),\ t \in \mathbb{R})\,.
\end{align}
\item[(B)] Operator-valued distribution: For every $m \in \mathbb{N}$ and $\psi \in \mathcal{D}$, the map 
\begin{align}
	F_1 \otimes \cdots \otimes F_m \mapsto (\psi,\boldsymbol{\varrho}(F_1) \cdots \boldsymbol{\varrho}(F_m) \psi) \quad (F_j \in C_0^\infty(M))
\end{align}
extends linearly to a distribution on $M^m$.
\item[(C)] Furthermore, it will be assumed that there is an integer $\ell$ so that 
$(1 + H)^{-\ell} \boldsymbol{\varrho}(F) (1 + H)^{-\ell}$ extends, for any $F \in C_0^\infty(M)$, to a bounded operator on $\mathcal{H}$.
This is  referred to by saying that ``the 
$\boldsymbol{\varrho}(F)$ fulfill a polynomial $H$-bound''. It implies that 
$\boldsymbol{\varrho}(F)(1 + H)^{2 \ell}$ extends to a bounded operator for all
$F \in C_0^\infty(M)$ \cite{FreHer}. Therefore, all vectors in ${\rm dom}( (1+ H)^{2\ell})$,
the (graph norm closed) domain of definition of $(1 + H)^{2\ell}$, are in the 
(graph norm closed) domain of definition ${\rm dom}(\boldsymbol{\varrho}(F))$
of the selfadjoint extension of $\boldsymbol{\varrho}(F)$ for real-valued $F$. 
\item[(D)] It is also assumed that the $\boldsymbol{\varrho}(F)$ are affiliated with the 
local von Neumann algebras in the following sense:
For real-valued $F$ such that ${\rm supp}(F) \subset O$, it holds that every bounded function 
$b(\boldsymbol{\varrho}(F))$ of $\boldsymbol
{\varrho}(F)$ in the sense of the spectral calculus (for $b: \mathbb{R} \to \mathbb{R}$ continuous and bounded)
is contained in ${\sf A}(O)$.\footnote{We notationally identify $\boldsymbol{\varrho}(F)$ and its selfadjoint extension as no ambiguity is likely to arise.}
\begin{figure}[!ht]
	\centering
	\begin{tikzpicture}[use Hobby shortcut,tangent/.style={in angle={(180+#1)},Hobby finish,designated Hobby path=next,out angle=#1},extended line/.style={shorten >=-#1,shorten <=-#1},fill between/on layer=main,line join=miter,line cap=rect]
		\useasboundingbox (-5.7,-4.2) rectangle (6.3,4.2);
		%
		%%% Cauchy surfaces def.
		%
		\path[name path=upper] (-5.2,1)..(-2,0.9)..(2,1.1)..(5.2,1); % upper surface
		\path[name path=lower] (-5.2,-1)..(-2,-1.1)..(2,-0.9)..(5.2,-1); % lower surface
		%
		%%% Support of $G$
		%
		\begin{scope}
			\clip (-3.3,-3) rectangle (3.3,3);
			\tikzfillbetween[of=upper and lower]{blue!30,opacity=0.2};
		\end{scope}
		%
		%%% Cauchy surfaces draw
		%
		\draw[line cap=round,black!70] (-5.2,1)..(-2,0.9)..(2,1.1)..(5.2,1);
		\draw[line cap=round,black!70] (-5.2,-1)..(-2,-1.1)..(2,-0.9)..(5.2,-1);
		\path[name path=r1] (3,-3) -- (3,3); % help line vertical right (boundary S)
		\path[name path=l1] (-3,-3) -- (-3,3); % help line vertical left (boundary S)
		%
		% thicker lines for $S$
		\begin{scope}
			\clip (-2.983,-3) rectangle (2.983,3);
			\draw[line width=0.5mm] (-5.2,1)..(-2,0.9)..(2,1.1)..(5.2,1);
			\draw[line width=0.5mm] (-5.2,-1)..(-2,-1.1)..(2,-0.9)..(5.2,-1);
		\end{scope}
		%
		%%% Define intersection points $S\subset\Sigma$
		%
		\node[coordinate,name intersections={of=upper and l1}] (ul) at (intersection-1) {};
		\node[coordinate,name intersections={of=upper and r1}] (ur) at (intersection-1) {};
		\node[coordinate,name intersections={of=lower and l1}] (dl) at (intersection-1) {};
		\node[coordinate,name intersections={of=lower and r1}] (dr) at (intersection-1) {};
		%
		%%% Cauchy developments
		%
		% Upper
		\path[name path=ulpath1] (ul) -- ([yshift=5cm, xshift=5cm]ul);
		\path[name path=urpath1] (ur) -- ([yshift=5cm, xshift=-5cm]ur);
		\path[name path=ulpath2] (ul) -- ([yshift=-5cm, xshift=5cm]ul);
		\path[name path=urpath2] (ur) -- ([yshift=-5cm, xshift=-5cm]ur);
		\node[coordinate,name intersections={of=ulpath1 and urpath1}] (uc1) at (intersection-1) {};
		\node[coordinate,name intersections={of=ulpath2 and urpath2}] (uc2) at (intersection-1) {};
		\draw[name path=ucauchy,dashed,black!70] (ul) -- (uc1) -- (ur) -- (uc2) -- cycle;
		%
		% Lower
		\path[name path=dlpath1] (dl) -- ([yshift=5cm, xshift=5cm]dl);
		\path[name path=drpath1] (dr) -- ([yshift=5cm, xshift=-5cm]dr);
		\path[name path=dlpath2] (dl) -- ([yshift=-5cm, xshift=5cm]dl);
		\path[name path=drpath2] (dr) -- ([yshift=-5cm, xshift=-5cm]dr);
		\node[coordinate,name intersections={of=dlpath1 and drpath1}] (dc1) at (intersection-1) {};
		\node[coordinate,name intersections={of=dlpath2 and drpath2}] (dc2) at (intersection-1) {};
		\draw[name path=dcauchy,dashed,black!70] (dl) -- (dc1) -- (dr) -- (dc2) -- cycle;
		%
		% Fill intersection of Cauchy developments
		\node[coordinate,name intersections={of=ucauchy and dcauchy}] (rcauchy) at (intersection-1) {};
		\node[coordinate,name intersections={of=ucauchy and dcauchy}] (lcauchy) at (intersection-2) {};
		\fill[black!10,opacity=0.4,postaction={pattern={Lines[angle=45, line width=0.2pt, distance=2mm]},pattern color=black!50}] (lcauchy) -- (dc1) -- (rcauchy) -- (uc2) -- cycle;
		%
		%%% Subsets
		%
		% O
		\coordinate (A) at (1.1,0);
		\coordinate (B) at (0.5,0.7);
		\coordinate (C) at (0.3,0.6);
		\coordinate (D) at (-0.2,0.5);
		\coordinate (E) at (-0.8,0.3);
		\coordinate (F) at (-0.9,-0.5);
		\coordinate (G) at (0,-0.7);
		\draw[black] (A) to [closed, curve through = {(B) (C) (D) (E) (F) (G)}] (A);
		%
		% O^\times
		\coordinate (A1) at (4.5,0);
		\coordinate (B1) at (3.5,1);
		\coordinate (C1) at (2.5,1.5);
		\coordinate (D1) at (-2.5,1.5);
		\coordinate (E1) at (-3.5,1);
		\coordinate (F1) at (-4.5,0);
		\coordinate (G1) at (-3.5,-1);
		\coordinate (H1) at (-2.5,-1.5);
		\coordinate (I1) at (2.5,-1.5);
		\coordinate (J1) at (3.5,-1);
		\draw[black] (A1)..([tangent=135]B1)..(C1)..(D1)..([tangent=225]E1)..(F1) (F1)..([tangent=315]G1)..(H1)..(I1)..([tangent=45]J1)..(A1);
		%
		% O^$\flat$
		\coordinate (A2) at (4.9,0);
		\coordinate (B2) at (3.9,1);
		\coordinate (C2) at (3,1.7);
		\coordinate (D2) at (-3,1.7);
		\coordinate (E2) at (-3.9,1);
		\coordinate (F2) at (-4.9,0);
		\coordinate (G2) at (-3.9,-1);
		\coordinate (H2) at (-3,-1.7);
		\coordinate (I2) at (3,-1.7);
		\coordinate (J2) at (3.9,-1);
		\draw[black!50] (A2)..([tangent=135]B2)..(C2)..(D2)..([tangent=225]E2)..(F2) (F2)..([tangent=315]G2)..(H2)..(I2)..([tangent=45]J2)..(A2);	
		%
		% O^\sharp
		\coordinate (A3) at (5.3,0);
		\coordinate (B3) at (4.3,1);
		\coordinate (C3) at (3,2.1);
		\coordinate (D3) at (-3,2.1);
		\coordinate (E3) at (-4.3,1);
		\coordinate (F3) at (-5.3,0);
		\coordinate (G3) at (-4.3,-1);
		\coordinate (H3) at (-3,-2.1);
		\coordinate (I3) at (3,-2.1);
		\coordinate (J3) at (4.3,-1);
		\draw[black] (A3)..([tangent=135]B3)..(C3)..(D3)..([tangent=225]E3)..(F3) (F3)..([tangent=315]G3)..(H3)..(I3)..([tangent=45]J3)..(A3);	
		%
		%%% Labels
		%
		\node[scale=1.2] at ([xshift=1pt,yshift=-1pt]0,0) {$O$};
		\node[scale=1.2] at ([xshift=-6pt]lcauchy) {$\textcolor{blue!80!black}{\mathrm{supp}(G)}$};
		\node at ([xshift=-21pt,yshift=7pt]ur) {$S_{t_0 + \theta}$};
		\node at ([xshift=-21pt,yshift=7pt]dr) {$S_{t_0 - \theta}$};
		\node at ([xshift=27pt,yshift=4pt]4.5,1) {$\Sigma_{t_0 + \theta}$};
		\node at ([xshift=27pt,yshift=4pt]4.5,-1) {$\Sigma_{t_0 - \theta}$};
		\node at ([xshift=18pt,yshift=-8pt]uc1) {$D(S_{t_0 + \theta})$};
		\node at ([xshift=18pt,yshift=8pt]dc2) {$D(S_{t_0 - \theta})$};
		\path[draw=black,{<[width=2mm,length=2mm]}-] (-3.2,1.24) to[bend right] (-4,1.7) node[xshift=-11pt,yshift=2pt,scale=1.2] {$O^\times$};
		\path[draw=black!50,{<[width=2mm,length=2mm]}-] (3,1.71) to[bend left] (3.8,2.17) node[xshift=11pt,yshift=2pt,scale=1.2] {$\textcolor{black!50}{O^\flat}$};
		\path[draw=black,{<[width=2mm,length=2mm]}-] (-2.5,2.4) to[bend right] (-3.1,2.8) node[xshift=-11pt,yshift=2pt,scale=1.2] {$O^\sharp$};
	\end{tikzpicture}
	\caption{Illustration of the relative localization of ${\rm supp}(G)$ (shaded in blue), the intersection $D(S_{t_0 - \theta}) \cap D(S_{t_0 + \theta})$ (shaded and hatched in grey), and $O$ according to assumption (E). The time coordinate $t$ runs vertically upwards. The causally convex open spacetime regions $O^\times$, $O^\flat$ and $O^\sharp$ appear in Theorem \ref{Thm:3rd}.}
	\label{fig}
\end{figure}
\item[(E)] The property giving $\boldsymbol{\varrho}$ the significance of an energy density is the following (see Figure \ref{fig} for a depiction of the geometric setup). Suppose that $\underline{g} \in C_0^\infty(\Sigma)$ is non-negative, fulfilling
$\underline{g}(p) = 1$ for all $p \in S$ with $S$ an open subset of $\Sigma$, and let 
$g_0 \in C_0^\infty(\mathbb{R})$ be non-negative, with ${\rm supp}(g_0) = [t_0 -\theta, t_0 + \theta]$
for some $t_0 \in \mathbb{R}$ and some $\theta >0$, and $\int_{-\infty}^\infty g_0(t)\dif t = 1$. Then for $G$ defined by $G(t,p) = g_0(t)\underline{g}(p)$ and any $O \subset D(S_{t_0 - \theta}) \cap D(S_{t_0 + \theta}) \cap {\rm supp}(G)$, it is assumed that 
\begin{align} \label{eq:dynamicgenerator}
	[\boldsymbol{\varrho}(G),A]\Omega = [H,A]\Omega = HA \Omega \quad (A \in {\sf A}_\infty(O))\,.
\end{align}
Here we have used the notation $S_t = \{t\} \times S$, and also $H\Omega = 0$ (assumption (d)) in the rightmost equality of \eqref{eq:dynamicgenerator}.
Note furthermore that the condition $O \subset D(S_{t_0 - \theta}) \cap D(S_{t_0 + \theta})$ implicitly
requires a small enough $\theta$ and a large enough $S$.
\end{itemize}

\section{Discussion of the assumptions}
\label{sec:assump}

In the present section, we will provide some more comments on the assumptions made, as well as some additional results which will be used later. \medskip

The assumptions (a)--(d) are quite standard for algebraic quantum field theory. They are generalizations of the Haag-Kastler axioms for an opera\-tor-algebraic setting of a quantum field theory in a vacuum representation on Minkowski spacetime \cite{Haag,HaKa} to a theory in the GNS representation of a ground state on a static, globally hyperbolic spacetime, where they are typical properties in the example case of linear quantum field theories \cite{Dim,Wald-QFTCST,Sanders-KMS} (see also \cite{FV-AQFT} for a more general context).

Assumption (e), the Reeh-Schlieder property, was first shown to be fulfilled for the quantized free scalar field on Minkowski spacetime \cite{ReehSch,StrWigh}. It expresses the existence of correlations in the vacuum state, enabling the approximation of any state with arbitrary accuracy by means of operations on the vacuum that are localized in any open region. This property is also known to hold, among other cases, for ground states of certain linear quantum fields on static, globally hyperbolic spacetimes \cite{Str-RS}, and in a weaker form for locally covariant quantum field theories on globally hyperbolic spacetimes, including Klein-Gordon, Dirac and Proca fields \cite{San-RS,Dap-RS}.

Assumption (f) on the local preparability of states is typically inferred from the split property and the type III property of the local von Neumann algebras; we refer to the reviews \cite{Yng-typeIII,Yng-LocEnt,Sum} and also \cite{BuDoLo,Wer-LocPreSplit}. The split property, which implies statistical independence of local algebras associated to spacelike separated regions, has been established in the GNS representation of quasifree Hadamard states of some linear quantum fields on globally hyperbolic spacetimes \cite{Ver-split,Ver-continuity,DAnHol}, and under certain conditions also for general locally covariant quantum field theories \cite{Few-split} (see also \cite{Few-split-rev}). The type III property (in the classification of von Neumann factors, see \cite[Sec. V.2.4]{Haag} for a brief outline) is a common property of local von Neumann algebras in quantum field theory that has been shown to be fulfilled, e.g., in the GNS representation of quasifree Hadamard states of the Klein-Gordon and Dirac field on curved spacetimes \cite{Ver-continuity,DAnHol}; further general results, sufficient conditions and references can be found in \cite{Dri,Fre,BuVe,Yng-typeIII}.\medskip

Concerning the assumptions (A)--(E) on the energy density, they are largely the assumptions one would make for a local quantum field in an operator-algebraic context, where assumption (D) expresses the locality condition in a strong form. The ``$H$-bound'' of assumption (C) has been widely discussed in the said context, and it holds under general conditions for quantum field theories on Minkowski spacetime; see Chs.\ 12 to 14 in \cite{BauWo} as well as \cite{FreHer} and references cited there for further discussion. 

One of the consequences is that, for the case of (say, $(1 + 3)$-dimensional) Minkowski spacetime, the domain $\mathcal{D}$ can be chosen such that for any $\psi,\psi' \in \mathcal{D}$ there is a smooth function $x \mapsto (\psi,\boldsymbol{\varrho}[x]\psi')$ of spacetime points $x$ so that $(\psi,\boldsymbol{\varrho}(F)\psi') = 
\int_{\mathbb{R}^4} F(x)\,(\psi,\boldsymbol{\varrho}[x]\psi')\Dif4 x$ holds for all smooth, compactly supported test functions $F$ on Minkowski spacetime. Note that 
$(\psi,\boldsymbol{\varrho}[x]\psi')$ is a slightly improper notation for a quadratic form on $\mathcal{D}$ defined for every $x$. In a similar manner, one 
can assume that every coordinate component of a stress-energy tensor observable of the quantum field theory is given by an $H$-bounded quantum field $F \mapsto \boldsymbol{T}_{\mu\nu}(F)$, and that for any $\psi,\psi' \in \mathcal{D}$ the map $x \mapsto (\psi,\boldsymbol{T}_{\mu \nu}[x]\psi')$ is a smooth function on spacetime satisfying $(\psi,\boldsymbol{T}_{\mu \nu}(F)\psi') = \int_{\mathbb{R}^4} F(x) (\psi,\boldsymbol{T}_{\mu \nu}[x]\psi')\Dif4 x$. Actually,
the expectation values of the (renormalized) stress-energy tensor in Hadamard states
of linear quantum fields on generic spacetimes are given by smooth tensor fields \cite{Mor,Wald-QFTCST}.

For the $(1+d)$-dimensional static, globally hyperbolic spacetime $M = \mathbb{R} \times \Sigma$ that we consider, we envisage the energy density as arising from a stress-energy tensor in the sense that 
\begin{align} \label{eq:stress_00}
 (\psi,\boldsymbol{\varrho}[x]\psi') = (\psi,\boldsymbol{T}_{\mu \nu}[x]\psi')e_0^\mu e_0^\nu \quad \ \ (x = (t,p) \in \mathbb{R} \times \Sigma)
\end{align}
where $e_0^\mu = \alpha^{-1/2}\partial_t^\mu$ is the normalized timelike Killing vector field of the static spacetime. 

This brings us to giving a motivation of assumption (E) which we have highlighted as the property characteristic of an ``energy density''. One would generally expect 
that the energy density, integrated over a Cauchy surface of the static foliation we have at hand, yields the total energy. That means,
\begin{align} \label{eq:density-int}
 \int_\Sigma (\psi,\boldsymbol{\varrho}[t,p]\psi') \dif{\rm vol}_h(p) = (\psi,H\psi') 
\end{align}
for all $\psi,\psi' \in \mathcal{D}$ such that the integral exists.
Note that the right hand side is $t$-independent. Provided that 
$(\psi,\boldsymbol{\varrho}[t,p]\psi')$ and its derivatives vanish fast enough as 
``$p$ approaches $\infty$'', one can use the divergence-freeness of the stress-energy tensor together with Gauss' law to conclude that the integral on the left hand side of the previous equation is independent of $t$, so that the equation
is indeed consistent. 

Assuming now that \eqref{eq:density-int}
holds also for (suitable) vectors $\psi,\psi'$ contained in ${\sf A}_\infty(M)\Omega$, and that $G(t,p) = g_0(t)\underline{g}(p)$ and $O$ are chosen as 
stated in assumption (E), one obtains for $t \in [t_0 - \theta,t_0 + \theta]$ (cf.\ assumption (E)), and $A \in {\sf A}_\infty(O)$, 
\begin{align}
 (\psi,[H,A]\Omega) & = \int_\Sigma (\psi,[\boldsymbol{\varrho}[t,p],A]\Omega)\dif{\rm vol}_h(p) \label{eq:eins_}\\
 & = \int_\Sigma (\psi,[\boldsymbol{\varrho}[t,p],A]\Omega)\,\underline{g}(p)\dif{\rm vol}_h(p) \label{eq:zwei_} \\
 & = \int_{t_0 -\theta}^{t_0 + \theta}\int_\Sigma (\psi,[\boldsymbol{\varrho}[t,p],A]\Omega)\,\underline{g}(p)\dif{\rm vol}_h(p) \,g_0(t)\dif t \label{eq:drei_} \\
 & = (\psi,[\boldsymbol{\varrho}(G),A]\Omega)
\end{align}
where we could pass from \eqref{eq:eins_} to \eqref{eq:zwei_} since 
$\underline{g}$ is chosen such that it is equal to 1 on the support of 
$p \mapsto (\psi,[\boldsymbol{\varrho}[t,p],A]\Omega)$ for $t \in [t_0-\theta,t_0 + \theta]$. On the other hand, using that the stress-energy tensor has vanishing 
divergence, the integral in \eqref{eq:eins_} is $t$-independent 
(which is now 
a rigorous argument since the integrand is compactly supported in $p \in \Sigma$
for every $t$) which permits passage from \eqref{eq:zwei_} to \eqref{eq:drei_}
by the specifics of $g_0$. This serves to motivate how generally expected 
characteristic properties of an energy density in quantum field theory that are related to
obvious ``classical'' counterparts, such as \eqref{eq:stress_00} and \eqref{eq:density-int}, lead to the properties we have imposed on the energy-density quantum field in our assumption (E). While we expect that the line of our steps of motivation could be made rigorous --- certainly for linear quantized fields --- by a careful choice of state vectors $\psi$ and $\psi'$, our conditions set forth in (E) avoid the potential difficulties that may occur therein (like existence of 
the quadratic forms $(\psi,\boldsymbol{\varrho}[t,p]\psi')$, or of
the integral in \eqref{eq:density-int}) and  therefore are a more general way
of capturing the essential properties of a stress-energy observable in the present,
model-independent setting.

\medskip

For completeness and later use, we put on record a standard result, of which
similar variants can be found in \cite{BauWo} and in \cite{FreHer}. The notation is as follows. 
We choose $h\in C_0^\infty(\mathbb{R})$, $h \ge 0$ with ${\rm supp}(h) = [-1,1]$ and $\int_{-\infty}^\infty h(t)\dif t = 1$, and define the $\delta$-family $h_\kappa(t) = \kappa^{-1}h(t/\kappa)$ $(\kappa > 0,\ t \in \mathbb{R})$. 

\begin{Lemma} \label{Le:Hn-approx}
 Let $\xi \in \mathcal{H}$, with $\xi \ne 0$, be contained in 
 the domain of $(1 + H)^\nu$ for some $\nu \in \mathbb{N}_0$, and 
 let $O \subset M$ be a non-empty open subset such that the open interior 
 of $O^\perp$ is non-empty. Then there is a sequence $A_{(N)} \in {\sf A}_\infty(O)$
 $(N \in \mathbb{N})$ so that
 \begin{align}
  \| (1 + H)^\nu(\xi - A_{(N)}\Omega)\| \to 0 \quad (N \to \infty) \,.
 \end{align}
Moreover, the sequence  can be chosen such that $\| A_{(N)}\Omega\| = \|\xi\|$.
\end{Lemma}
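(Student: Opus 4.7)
The plan is to combine the Reeh--Schlieder property with time-smoothing by the approximate identity $h_\kappa$: Reeh--Schlieder supplies a vector of the form $B\Omega$ approximating $\xi$ in norm, and the time-smoothing both places the approximant inside ${\sf A}_\infty(O)$ and regularizes it so that arbitrary powers of $H$ can be applied to it. As preparation I would fix a non-empty open $O_1 \subset O$ with $\overline{O_1}$ compact, together with a scale $\kappa_0 > 0$ such that $\tau_t(\overline{O_1}) \subset O$ for all $|t| \le \kappa_0$; this is possible by openness of $O$. For $B \in {\sf A}(O_1)$ and $0 < \kappa \le \kappa_0$, covariance (c), isotony (a), and the weak closure of ${\sf A}(O)$ imply $u_{h_\kappa}(B) \in {\sf A}(O)$, and hence $u_{h_\kappa}(B) \in {\sf A}_\infty(O)$.

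The workhorse identity, obtained by inserting the spectral resolution of $H$ into $u_{h_\kappa}(B)\Omega = \int h_\kappa(t)\,U_t B\Omega\,\dif t$ and using $U_{-t}\Omega = \Omega$, is
\begin{align*}
u_{h_\kappa}(B)\Omega = \hat h(\kappa H)\,B\Omega, \qquad \hat h(\mu) := \int_{-\infty}^\infty h(t)\,{\rm e}^{it\mu}\dif t.
\end{align*}
Since $h \in C_0^\infty(\mathbb{R})$, the function $\hat h$ is Schwartz, so for every fixed $\kappa > 0$ the function $\lambda \mapsto (1+\lambda)^\nu \hat h(\kappa\lambda)$ is bounded on $[0,\infty)$, and therefore the operator $(1+H)^\nu \hat h(\kappa H)$ extends to a bounded operator on $\mathcal{H}$ with norm at most $M_\kappa := \sup_{\lambda \ge 0}(1+\lambda)^\nu |\hat h(\kappa\lambda)|$; note that $M_\kappa \to \infty$ as $\kappa \to 0$. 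Since $(1+H)^\nu$ and $\hat h(\kappa H)$ arise from the functional calculus of the same operator $H$, they commute on ${\rm dom}((1+H)^\nu)$, and the triangle inequality gives
\begin{align*}
\|(1+H)^\nu(\xi - u_{h_\kappa}(B)\Omega)\| \le \|[\hat h(\kappa H) - I]\,(1+H)^\nu \xi\| + M_\kappa\,\|\xi - B\Omega\|.
\end{align*}

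Given $\epsilon > 0$, I would control the two summands in sequence. Because $\hat h(0) = \int h\,\dif t = 1$ and $|\hat h(\kappa\lambda) - 1|^2 \le (1+\|\hat h\|_\infty)^2$ uniformly, dominated convergence applied to the spectral measure of $(1+H)^\nu \xi$ gives $\|[\hat h(\kappa H) - I](1+H)^\nu \xi\| \to 0$ as $\kappa \to 0$. I would pick $\kappa \in (0,\kappa_0]$ so this first summand is below $\epsilon/2$. With this $\kappa$ fixed, I would then invoke Reeh--Schlieder (assumption (e)) applied to the non-empty open set $O_1$ to choose $B \in {\sf A}(O_1)$ with $\|\xi - B\Omega\| < \epsilon/(2M_\kappa)$; the element $\tilde A := u_{h_\kappa}(B) \in {\sf A}_\infty(O)$ then satisfies $\|(1+H)^\nu(\xi - \tilde A\Omega)\| < \epsilon$. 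Running this construction with $\epsilon = 1/N$ produces a sequence $\tilde A_{(N)} \in {\sf A}_\infty(O)$ with graph-norm convergence $\tilde A_{(N)}\Omega \to \xi$. Since $\xi \ne 0$, eventually $\tilde A_{(N)}\Omega \ne 0$, and the rescaled operator $A_{(N)} := (\|\xi\|/\|\tilde A_{(N)}\Omega\|)\,\tilde A_{(N)} \in {\sf A}_\infty(O)$ enforces the norm constraint $\|A_{(N)}\Omega\| = \|\xi\|$; the scalar factor tends to $1$, so graph-norm convergence is preserved.

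The only delicate point is the order of the parameter choices: $\kappa$ must be fixed before $B$, because the Reeh--Schlieder accuracy required of $\|\xi - B\Omega\|$ scales with $1/M_\kappa$, and $M_\kappa$ diverges as $\kappa$ is shrunk to make the spectral summand small. Once the two error sources are decoupled in this way, the bounds themselves are elementary consequences of the spectral theorem.
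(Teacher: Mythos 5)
Your proposal is correct and follows essentially the same route as the paper's proof: approximate $\xi$ by $B\Omega$ via Reeh--Schlieder on a slightly smaller region $O_1$, smooth in time with $u_{h_\kappa}$ using the identity $u_{h_\kappa}(B)\Omega = \hat h(\kappa H)B\Omega$, split the error into the spectral term $\|(\hat h(\kappa H)-I)(1+H)^\nu\xi\|$ and the term controlled by the operator norm of $(1+H)^\nu\hat h(\kappa H)$, fix $\kappa$ first and then choose $B$ accordingly, and finally rescale to enforce $\|A_{(N)}\Omega\|=\|\xi\|$. The only cosmetic difference is that you justify the vanishing of the spectral term by dominated convergence, whereas the paper uses strong continuity of $t\mapsto U_t(1+H)^\nu\xi$; both are equivalent and standard.
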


\begin{proof}
We will show first that for any given $\eta > 0$ there is some $A = A_\eta$ so
that 
\begin{align}
  \| (1 + H)^\nu(\xi - A\Omega)\| < \eta\,.
 \end{align}
By the Reeh-Schlieder property (e), given
any non-empty open subset $O_1$ of $M$ with $\overline{O_1} \subset O$,
the vacuum vector $\Omega$ is cyclic and separating for ${\sf A}(O_1)$. 
With any such choice of $O_1$, and with the $\delta$-family $h_\kappa$ $(\kappa > 0)$
as defined previously, there is some $\kappa_0 > 0$ so that (cf.\ \eqref{eq:convol})
$u_{h_\kappa}(A_1) \in {\sf A}_\infty(O)$ whenever $A_1 \in {\sf A}(O_1)$
and $0 < \kappa < \kappa_0$. 

One can pick $0 < \kappa <\kappa_0$ with the property that 
\begin{align} \label{eq:Hn-est}
 \|(1+H)^\nu ( \xi -
 \hat{h}(\kappa H)\xi) \| < \frac{\eta}{2}
\end{align}
since 
\begin{align}
 \|(1+H)^\nu & ( \xi -
 \hat{h}(\kappa H)\xi) \| \\ & = \left\lVert\,\int_{-\infty}^\infty h_\kappa(t)( {\bf 1} - U_t) (1 + H)^\nu\xi \dif t \,\right\rVert \\ &  = \left( \int_{-\infty}^\infty h_\kappa(t)\dif t \right) \cdot \sup_{- \kappa \le t \le \kappa}\,\|(1 - U_t)(1 + H)^\nu \xi \| \to 0 \quad (\kappa \to 0)\,.
\end{align}
Moreover, making use of the Reeh-Schlieder property, one can choose $A_1\in {\sf A}(O_1)$
with 
\begin{align}
 \| \xi - A_1\Omega \| < \frac{\eta}{2 \| (1 + H)^\nu \hat{h}(\kappa H) \| + 1}
\end{align}
where the denominator on the right hand side displays the operator norm 
of $(1 + H)^\nu \hat{h}(\kappa H)$ for some (fixed) $0 < \kappa < \kappa_0$ that
has been picked to achieve the estimate \eqref{eq:Hn-est}. This is a bounded operator
since $\hat{h}$ is a Schwartz-type function. Hence, we obtain with 
$A = A_\eta = u_{h_\kappa}(A_1)$,
\begin{align}
 \| (1 + H)^\nu & (\xi - A\Omega)\| \\ & \le  \|(1+H)^\nu ( \xi -
 \hat{h}(\kappa H)\xi) \| + \| (1 + H)^\nu \hat{h}(\kappa H)(\xi - A_1\Omega)\|
 \nonumber
 \\ & < \frac{\eta}{2} + \frac{\eta}{2} = \eta\,. \nonumber
 \end{align}
This shows that there is a sequence $A_{[n]} \in {\sf A}_\infty(O)$ $(n \in \mathbb{N})$ such that 
\begin{align} \label{eq:simple}
 \| (1 + H)^\nu(\xi - A_{[n]}\Omega) \| \to 0 \quad (n \to \infty)\,.
\end{align}
On redefining $A_{(N)} = (\| \xi \|/\|A_{[N]}\Omega\|)A_{[N]}$ $(N \in \mathbb{N})$,
we have $\| A_{(N)} \Omega \| = \|\xi\|$. Moreover, \eqref{eq:simple}
implies $\|A_{[n]}\Omega\| \to \| \xi \|$ as $n \to \infty$. Thus, we 
obtain 
\begin{align}
 & \|(1  + H)^\nu (\xi - A_{(N)}\Omega)\| \\
 & \le \| (1 + H)^\nu(\xi - A_{[N]}\Omega)\| + \left| 1 - \frac{\|\xi\|}{\|A_{[N]}\Omega\|} \right| \cdot \|(1 + H)^\nu A_{[N]} \Omega \| \nonumber
\end{align}
where both terms on the right hand side converge to 0 as $N \to \infty$ on account of 
\eqref{eq:simple}. This proves the lemma.
\end{proof}

\section{Locally averaged quantum energy inequalities}
\label{sec:aqei}

In the following, we consider a quantum field theory with an energy density
on a static, globally hyperbolic spacetime, subject to the conditions described in 
Section \ref{sec:setting}. 

The first result we present is a simple variant of a quantum energy inequality for a local averaging of the energy density. It doesn't need all of the assumptions we have made; local preparability of states is not required. First, we introduce some notation. We assume that $G \in C_0^\infty(M,\mathbb{R})$ has been chosen as in assumption (E), and an open spacetime region $O$ so that the property \eqref{eq:dynamicgenerator} holds for all $A \in {\sf A}_\infty(O)$. Then we introduce, for any given $r \ge 1$, the subset $\mathcal{V}_r(O)$ given by 
\begin{align}
	\mathcal{V}_r(O) = \{ A\Omega : A \in {\sf A}_\infty(O)\,, \ \| A \|  \le r\,, \ \|A\Omega \| = 1\,\} \,.
\end{align}

\begin{Theorem} \label{Thm:First}
 Let $G$ and $O$ be as described above. Then for any $r\ge 1$ and every unit vector $\psi \in \mathcal{V}_r(O)$, the following estimate holds: 
\begin{align}
 (\psi,\boldsymbol{\varrho}(G)\psi) \ge  - r \|\boldsymbol{\varrho}(G) \Omega \|
\end{align}
\end{Theorem}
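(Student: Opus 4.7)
The plan is to expand the quadratic form $(\psi,\boldsymbol{\varrho}(G)\psi)$ by pushing $\boldsymbol{\varrho}(G)$ past $A$ via the commutator identity in assumption (E). Writing $\psi = A\Omega$ with $A \in {\sf A}_\infty(O)$, $\|A\| \le r$ and $\|A\Omega\| = 1$, the remark following the definition of ${\sf A}_\infty(O)$ places $A\Omega$ in the $C^\infty$-domain of $H$; combined with the polynomial $H$-bound (C), this places $A\Omega$ in $\mathrm{dom}(\boldsymbol{\varrho}(G))$, so that all manipulations below are legitimate Hilbert-space identities.

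Applying \eqref{eq:dynamicgenerator} together with $H\Omega = 0$, I obtain
\begin{align}
\boldsymbol{\varrho}(G) A\Omega = [\boldsymbol{\varrho}(G),A]\Omega + A\boldsymbol{\varrho}(G)\Omega = H A\Omega + A\boldsymbol{\varrho}(G)\Omega.
\end{align}
Taking the inner product with $\psi = A\Omega$ yields
\begin{align}
(\psi,\boldsymbol{\varrho}(G)\psi) = (A\Omega, H A\Omega) + (A\Omega, A\boldsymbol{\varrho}(G)\Omega).
\end{align}
The first summand is non-negative because $H \ge 0$ by assumption (d). For the second, Cauchy--Schwarz together with the boundedness of $A$ gives
\begin{align}
\bigl|(A\Omega, A\boldsymbol{\varrho}(G)\Omega)\bigr| \le \|A\Omega\|\,\|A\boldsymbol{\varrho}(G)\Omega\| \le \|A\|\,\|\boldsymbol{\varrho}(G)\Omega\| \le r\,\|\boldsymbol{\varrho}(G)\Omega\|,
\end{align}
and the claimed lower bound follows at once.

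There is no substantial obstacle here: the argument is essentially a three-line rearrangement exploiting only the positivity of $H$ and the key identity \eqref{eq:dynamicgenerator}. The only technical point worth checking carefully is that the commutator relation, which in (E) is stated as an equality of vectors obtained by applying operators to $\Omega$, can legitimately be inserted inside $\boldsymbol{\varrho}(G) A\Omega$ in $\mathcal{H}$; this is exactly what the polynomial $H$-bound (C) guarantees, since $A\Omega \in \mathrm{dom}((1 + H)^{2\ell})$ for every $\ell$. I also note that the hypothesis $\|A\Omega\| = 1$ is used only to normalize $\psi$, while $\|A\| \le r$ is precisely what produces the factor $r$ in the final bound.
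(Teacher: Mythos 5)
Your proposal is correct and coincides with the paper's own argument: the same decomposition $(\psi,\boldsymbol{\varrho}(G)\psi) = (A\Omega,HA\Omega) + (A\Omega,A\boldsymbol{\varrho}(G)\Omega)$ via \eqref{eq:dynamicgenerator}, followed by positivity of $H$ and the Cauchy--Schwarz estimate on the remainder. Your additional remarks on why $A\Omega \in \mathrm{dom}(\boldsymbol{\varrho}(G))$ via assumption (C) make explicit a point the paper leaves implicit, which is a welcome refinement rather than a deviation.
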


\begin{proof}
Since $\psi = A\Omega$ with $A \in {\sf A}_\infty(O)$, and the property 
\eqref{eq:dynamicgenerator} of the energy density, we have 
\begin{align} \label{eq:pos-split}
	(\psi,\boldsymbol{\varrho}(G)\psi) & = (A\Omega,\boldsymbol{\varrho}(G)A\Omega) \\
	& = (A\Omega,[\boldsymbol{\varrho}(G),A]\Omega) + (A\Omega,A \boldsymbol{\varrho}(G)\Omega)  \nonumber \\
	& = (A\Omega,H A\Omega) + (A\Omega,A \boldsymbol{\varrho}(G)\Omega)  \nonumber \\
	& \ge - \|A\Omega\| \cdot \|A\| \cdot \|\boldsymbol{\varrho}(G)\Omega\| \ge - r \|\boldsymbol{\varrho}(G)\Omega\| \nonumber
\end{align}
proving the claim. Note that we have used \eqref{eq:dynamicgenerator} and the fact that $H$ has non-negative spectrum.
\end{proof}

Despite the state-dependence of the lower bound on the energy density owing to the appearance of $r$ as a factor on the right-hand side in the previous theorem, it is worth remarking that the bound is non-trivial in the sense of not being an upper bound. In fact, we have the following statement:
\begin{Theorem} \label{Thm:2nd}
 For the quantized free scalar field on the given static, globally hyperbolic 
 spacetime, there is for any choice of $G$ and $O$ as in assumption (E) and every $\epsilon > 0$ a sequence of operators $B_m \in {\sf A}_\infty(O)$ $(m \in \mathbb{N})$ such that 
 $\| B_m \| \le 1 + \epsilon$, $\|B_m \Omega \| = 1$ and 
\begin{align}
 (B_m\Omega,\boldsymbol{\varrho}(G)B_m\Omega) \to \infty
 \quad (m \to \infty).
\end{align}
\end{Theorem}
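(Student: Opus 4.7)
The plan is to exploit the unitary Weyl operators of the free scalar field to build coherent states of arbitrarily large classical energy localized near $\mathrm{supp}(G)$, and then to time-smear them into ${\sf A}_\infty(O)$ without spoiling the norm bounds. I first choose a real-valued smooth classical solution $f$ of the Klein-Gordon equation whose Cauchy data on $\Sigma_{t_0}$ is compactly supported in some spatial region $S_0$ taken so small that the Cauchy development $O_0$ of $\{t_0\}\times S_0$ is compactly contained in $O$ with slack in the time direction---i.e.\ $\tau_t(O_0)\subset O$ for all $|t|<\kappa_\ast$, for some $\kappa_\ast>0$. Since $G\ge 0$ does not vanish identically, $f$ can additionally be arranged so that
\[
E[f;G] \;:=\; \int_M G(x)\, T_{00}^{\mathrm{class}}[f](x)\, \dif\mathrm{vol}_g(x)\;>\;0,
\]
where $T_{00}^{\mathrm{class}}[f]$ denotes the classical energy density of $f$ with respect to the unit timelike Killing field. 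The corresponding Weyl operator $W(f)$ is unitary, with $\|W(f)\|=\|W(f)\Omega\|=1$, and lies in ${\sf A}(O_0)$.

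The central computation is the coherent-state identity
\begin{align}\label{eq:coh-id}
(W(\lambda f)\Omega,\,\boldsymbol{\varrho}(G)\,W(\lambda f)\Omega) \;=\; (\Omega,\boldsymbol{\varrho}(G)\Omega) + \lambda^2\,E[f;G] \qquad (\lambda\in\mathbb{R}),
\end{align}
which follows from the fact that $\boldsymbol{\varrho}$ is a Wick-ordered quadratic expression in the free field $\phi$ and its first derivatives, together with the Weyl shift $W(\lambda f)^*\phi(x)W(\lambda f)=\phi(x)+\lambda f(x)\,\mathbf{1}$. Upon expanding the square and taking the vacuum expectation, the terms linear in $\phi$ vanish, leaving precisely $\lambda^2 T_{00}^{\mathrm{class}}[f]$ above the vacuum value.

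To land inside ${\sf A}_\infty(O)$, I time-smear $W(\lambda f)$ using the $\delta$-family $h_\kappa$ from Lemma \ref{Le:Hn-approx}, setting $C_{\lambda,\kappa}:=u_{h_\kappa}(W(\lambda f))$. For $0<\kappa<\kappa_\ast$, each translate $U_tW(\lambda f)U_t^{-1}$ lies in ${\sf A}(\tau_t(O_0))\subset{\sf A}(O)$, so $C_{\lambda,\kappa}\in{\sf A}_\infty(O)$ with $\|C_{\lambda,\kappa}\|\le 1$. Since $U_t\Omega=\Omega$, the vector $C_{\lambda,\kappa}\Omega=\int h_\kappa(t)\,U_tW(\lambda f)\Omega\,\dif t$ converges in norm to $W(\lambda f)\Omega$ as $\kappa\to 0$, and the polynomial $H$-bound (C) together with the distributional property (B) gives $(C_{\lambda,\kappa}\Omega,\boldsymbol{\varrho}(G)C_{\lambda,\kappa}\Omega) \to (W(\lambda f)\Omega,\boldsymbol{\varrho}(G)W(\lambda f)\Omega)$ as $\kappa\to 0$. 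Given $\epsilon>0$, for each $m\in\mathbb{N}$ I set $\lambda_m=m$ and pick $\kappa_m$ small enough that $\|C_{m,\kappa_m}\Omega\|\ge(1+\epsilon)^{-1}$ and $(C_{m,\kappa_m}\Omega,\boldsymbol{\varrho}(G)C_{m,\kappa_m}\Omega)\ge\tfrac12 m^2 E[f;G]$. Then $B_m:=\|C_{m,\kappa_m}\Omega\|^{-1}C_{m,\kappa_m}$ satisfies $\|B_m\|\le 1+\epsilon$, $\|B_m\Omega\|=1$, and $(B_m\Omega,\boldsymbol{\varrho}(G)B_m\Omega)\ge\tfrac12(1+\epsilon)^{-2}\,m^2 E[f;G]\to\infty$.

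The main obstacle is the rigorous derivation of \eqref{eq:coh-id} within the abstract framework (A)--(E): one must confirm that the concrete Wick-ordered energy density of the free scalar field does satisfy (A)--(E) on an appropriate common invariant domain $\mathcal{D}$ containing $\Omega$ and invariant under the relevant Weyl operators, on which the formal coherent-state shift of $\boldsymbol{\varrho}(G)$ is justified. For the free scalar field all of this is standard, but needs to be spelled out carefully. Once in place, the remaining convergences and algebraic bounds are routine dominated-convergence arguments relying on the $H$-bound (C).
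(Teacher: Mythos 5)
Your proposal is correct and follows essentially the same strategy as the paper's proof in Appendix \ref{appendix-A}: take Weyl (coherent-state) operators $W(\lambda f)$ of growing amplitude localized in a slightly smaller region, time-smear them with $u_{h_\kappa}$ to land in ${\sf A}_\infty(O)$ while controlling $(1+H)^{2\ell}$-norms via the $H$-bound, and renormalize so that $\|B_m\Omega\|=1$ and $\|B_m\|\le 1+\epsilon$. The only genuine difference lies in how the quadratic divergence is extracted. The paper imports from \cite{FV-Pass} the identity $(W\Omega,\boldsymbol{\varrho}(G)W\Omega)=(W\Omega,HW\Omega)+(W\Omega,W\boldsymbol{\varrho}(G)\Omega)$ (the same splitting as in Theorem \ref{Thm:First}, resting on stress-energy conservation) and then uses the explicit formula $(W(m{\tt u}\oplus m{\tt v})\Omega,HW(m{\tt u}\oplus m{\tt v})\Omega)=\tfrac{m^2}{2}(\langle{\tt u},Q{\tt u}\rangle+\langle{\tt v},{\tt v}\rangle)$, whose positivity for nonzero data is automatic. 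You instead compute $(W(\lambda f)\Omega,\boldsymbol{\varrho}(G)W(\lambda f)\Omega)$ directly via the Weyl shift, obtaining $\lambda^2\int_M G\,T_{00}^{\mathrm{class}}[f]$ above the vacuum value; this bypasses the commutator identity but obliges you to verify $E[f;G]>0$, which does hold (both $G$ and the classical energy density are pointwise non-negative and overlap on an open set by your localization of the Cauchy data), but is an extra step the paper's route avoids. Both computations rest on the same standard facts about the free field, and the remaining smearing and normalization arguments in your write-up match the paper's.
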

\noindent The proof of this theorem will be given in Appendix \ref{appendix-A}.
Note that, by the Reeh-Schlieder property, $\bigcup_{r \ge 1} \mathcal{V}_r(O)$ is 
dense in the set unit vectors in $\mathcal{H}$; however, this is not true for $\mathcal{V}_r(O)$ for any 
fixed $r \ge 1$ under the assumptions we have made. Therefore, the lower bound 
on the energy density of Theorem \ref{Thm:First} is of limited use since, in order to 
apply to a set of unit vectors that is dense in the set of all unit vectors, formally
$r$ diverges to $\infty$. Such divergent behaviour is also supported by the results of 
\cite{ForHelRo}.\medskip

The next result, which is the main theorem of our work, is designed to overcome this shortcoming and gain some more control on a lower bound on the energy density for a dense set of vector states. To this end, we need to introduce some notation.

Again, we consider a test function 
$G \in C_0^\infty(M,\mathbb{R})$ and a non-empty open region $O$ so that the energy density
fulfills \eqref{eq:dynamicgenerator} from assumption (E). We also consider an arbitrary, causally convex open spacetime 
region $O^\times$ which contains ${\rm supp}(G)$, together
with another causally convex open spacetime region $O^\sharp$, so that $\overline{O^\times}
\subset O^\sharp$ and $O^\sharp$ admits a non-empty open causal complement. 
By the Reeh-Schlieder property (e), the vacuum vector $\Omega$ is cyclic and 
separating for ${\sf A}(O^\sharp)$. Therefore, there are the Tomita-Takesaki {\it modular conjugation} $J_\sharp$ and {\it modular operator} $\Delta_\sharp$ associated to the pair $({\sf A}(O^\sharp),\Omega)$ \cite{Tak,BraRob,Bor-Revol}, uniquely determined by the defining property 
\begin{align}
	J_\sharp \Delta^{1/2}_{\sharp} A\Omega = A^*\Omega \quad (A \in {\sf A}(O^\sharp))\,.
\end{align}
The corresponding {\it modular group} $\{\Delta^{is}_\sharp\}_{s \in \mathbb{R}}$ is commonly denoted as 
\begin{align}
	\Delta_\sharp^{is} = {\rm e}^{is K_{\sharp}} \, ,
\end{align}
where $K_\sharp = \log(\Delta_\sharp)$ is occasionally called the associated {\it modular Hamiltonian}.
\medskip

For $\lambda > 0$ we denote by $f_\lambda$ the scaled Gaussian,
\begin{align}
f_\lambda(s) = \frac{1}{\lambda}{\rm e}^{-(s/\lambda)^2/2} \quad (s \in \mathbb{R})\,,
\label{eq:flambda}
\end{align}
and $\hat{f}_\lambda$ denotes its Fourier transform, given by $\hat{f}_\lambda(k) = \int_{-\infty}^\infty {\rm e}^{isk} f_\lambda(s) \dif s = \sqrt{2\pi} {\rm e}^{-(\lambda k)^2/2}$ $(k \in \mathbb{R})$.

\begin{Theorem} \label{Thm:3rd}
 Let the test function $G$ and spacetime regions $O$, $O^\times$ and $O^\sharp$ be chosen
 with the properties described above. Then for any unit vector $\psi$ in the dense domain $\mathcal{D}$ and arbitrary $\epsilon > 0$, there is some $\lambda_0 > 0$ 
(depending on $\psi$ and $\epsilon$) such that 
\begin{align}
(\psi,\boldsymbol{\varrho}(G) \psi) \ge - \epsilon - \|\Delta_\sharp^{-1/2} \hat{f}_\lambda(K_\sharp)\boldsymbol{\varrho}(G)\Omega \| 
\end{align}
holds for all $0 < \lambda < \lambda_0$. 
\end{Theorem}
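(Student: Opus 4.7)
The plan combines local preparability (f) with the Reeh-Schlieder approximation from Lemma \ref{Le:Hn-approx}, then uses the dynamic identity (E) and a modular Cauchy-Schwarz bound. Fix an intermediate causally convex region $O^\flat$ with $\overline{O^\times}\subset O^\flat$ and $\overline{O^\flat}\subset O^\sharp$. By (f) applied to $(O^\times, O^\flat)$, there is $Y\in{\sf A}(O^\flat)$ with $\|Y\|=1$ such that $(\psi, X\psi)=(Y\Omega, XY\Omega)$ for every $X\in{\sf A}(O^\times)$. Since bounded functions of $\boldsymbol{\varrho}(G)$ lie in ${\sf A}(O^\times)$ by (D), the spectral measures of $\boldsymbol{\varrho}(G)$ in $\psi$ and in $Y\Omega$ coincide; combined with $\psi\in\mathcal{D}\subset\mathrm{dom}(\boldsymbol{\varrho}(G)^2)$, this forces $Y\Omega\in\mathrm{dom}(\boldsymbol{\varrho}(G))$ and $(\psi,\boldsymbol{\varrho}(G)\psi)=(Y\Omega,\boldsymbol{\varrho}(G)Y\Omega)$. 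Now apply Lemma \ref{Le:Hn-approx} with $\xi=Y\Omega$, $\nu=2\ell$, and the region $O$ from (E), to produce $A_N\in{\sf A}_\infty(O)$ with $\|A_N\Omega\|=1$ and $(1+H)^{2\ell}(A_N\Omega-Y\Omega)\to 0$. The $H$-bound (C) then yields $\boldsymbol{\varrho}(G)A_N\Omega\to\boldsymbol{\varrho}(G)Y\Omega$ in norm, so $(A_N\Omega,\boldsymbol{\varrho}(G)A_N\Omega)\to(\psi,\boldsymbol{\varrho}(G)\psi)$.

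Since $A_N\in{\sf A}_\infty(O)$, the dynamic identity \eqref{eq:dynamicgenerator} and $H\ge 0$ give
\[
(A_N\Omega,\boldsymbol{\varrho}(G)A_N\Omega) = (A_N\Omega,HA_N\Omega) + (A_N^*A_N\Omega,\boldsymbol{\varrho}(G)\Omega) \ge (A_N^*A_N\Omega,\boldsymbol{\varrho}(G)\Omega),
\]
and hence $(\psi,\boldsymbol{\varrho}(G)\psi)\ge\lim_N(A_N^*A_N\Omega,\boldsymbol{\varrho}(G)\Omega)$. To bound the right-hand side, split $\boldsymbol{\varrho}(G)\Omega = \tfrac{1}{\sqrt{2\pi}}\hat f_\lambda(K_\sharp)\boldsymbol{\varrho}(G)\Omega + \bigl(I - \tfrac{1}{\sqrt{2\pi}}\hat f_\lambda(K_\sharp)\bigr)\boldsymbol{\varrho}(G)\Omega$. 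The Gaussian cutoff places the first piece in $\mathrm{dom}(\Delta_\sharp^{-1/2})$; for the self-adjoint $A_N^*A_N\in{\sf A}(O^\sharp)$ we have $\Delta_\sharp^{1/2}A_N^*A_N\Omega = J_\sharp A_N^*A_N\Omega$, and the modular pairing $(v,\Phi)=(\Delta_\sharp^{1/2}v,\Delta_\sharp^{-1/2}\Phi)$ combined with Cauchy-Schwarz yields
\[
\bigl|(A_N^*A_N\Omega,\hat f_\lambda(K_\sharp)\boldsymbol{\varrho}(G)\Omega)\bigr| \le \|A_N^*A_N\Omega\| \cdot \|\Delta_\sharp^{-1/2}\hat f_\lambda(K_\sharp)\boldsymbol{\varrho}(G)\Omega\|,
\]
while the complementary piece tends to $0$ as $\lambda\to 0$ by dominated convergence against the spectral measure of $K_\sharp$ at $\boldsymbol{\varrho}(G)\Omega$ (since $\mathrm{e}^{-(\lambda k)^2/2}\to 1$ pointwise), and can therefore be absorbed into $\epsilon$ by taking $\lambda$ sufficiently small.

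The main obstacle will be controlling $\|A_N^*A_N\Omega\|$ uniformly in $N$, since Lemma \ref{Le:Hn-approx} normalises only $\|A_N\Omega\|=1$ while the operator norm $\|A_N\|$ may blow up. I would remedy this by refining the approximation: first time-smooth $Y$ into $\tilde Y = u_{h_\kappa}(Y)\in{\sf A}_\infty(O^\sharp)$ with $\|\tilde Y\|\le 1$ and $\|(\tilde Y - Y)\Omega\|$ arbitrarily small, and then invoke a norm-bounded Kaplansky-type strong-$*$ density argument to produce approximants $A_N\in{\sf A}_\infty(O)$ with $\|A_N^*A_N\Omega\|\to\|\tilde Y^*\tilde Y\Omega\|\le 1$. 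With this uniform bound in place, choosing $\lambda<\lambda_0(\psi,\epsilon)$ sufficiently small closes the estimate and delivers
\[
(\psi,\boldsymbol{\varrho}(G)\psi)\ge -\epsilon - \|\Delta_\sharp^{-1/2}\hat f_\lambda(K_\sharp)\boldsymbol{\varrho}(G)\Omega\|.
\]
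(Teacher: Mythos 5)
Your overall architecture (local preparability, passage to $Y\Omega$, time-smoothing to $Y_{(\kappa)}=u_{h_\kappa}(Y)$ with $\|Y_{(\kappa)}\|\le 1$, approximation via Lemma \ref{Le:Hn-approx} in the graph norm of $(1+H)^{2\ell}$, the identity \eqref{eq:dynamicgenerator} plus $H\ge 0$, and a modular Cauchy--Schwarz step with the Gaussian cutoff $\hat{f}_\lambda(K_\sharp)$) coincides with the paper's. But the gap you flag at the end is fatal as you have set things up, and the remedy you propose cannot work. A norm-bounded Kaplansky-type approximation of $\tilde{Y}$ by elements of ${\sf A}_\infty(O)$ is unavailable: Kaplansky density approximates elements of a von Neumann algebra by norm-bounded elements of a weakly dense subalgebra of the \emph{same} algebra, whereas $\tilde{Y}$ lives in ${\sf A}(O^\sharp)$ and ${\sf A}(O)$ is a proper subalgebra attached to the much smaller region $O\subset{\rm supp}(G)$. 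Indeed, if there were $A_N\in{\sf A}(O)$ with $\sup_N\|A_N\|<\infty$ and $A_N\Omega\to\tilde{Y}\Omega$, then $A_NB\Omega=BA_N\Omega\to B\tilde{Y}\Omega$ for all $B\in{\sf A}(O)'$, so on the dense set ${\sf A}(O)'\Omega$ the $A_N$ would converge strongly to some $T\in{\sf A}(O)''={\sf A}(O)$ with $T\Omega=\tilde{Y}\Omega$, and since $\Omega$ is separating for ${\sf A}(O^\sharp)$ this would force $\tilde{Y}=T\in{\sf A}(O)$ --- false in general. This impossibility is exactly the content of Theorem \ref{Thm:2nd} and the remark that $\mathcal{V}_r(O)$ is not dense for any fixed $r$; with an uncontrolled $\|A_N\|$ multiplying the modular norm, your final inequality degenerates to (a weaker form of) Theorem \ref{Thm:First}.

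The paper's proof evades the operator norm of the approximant by keeping the splitting \emph{asymmetric}: only one of the two $Y_{(\kappa)}$'s is replaced by $A\in{\sf A}_\infty(O)$. One passes from $(Y_{(\kappa)}\Omega,\boldsymbol{\varrho}(G)Y_{(\kappa)}\Omega)$ to $(Y_{(\kappa)}\Omega,\boldsymbol{\varrho}(G)A\Omega)$ (this needs only $\|(1+H)^{2\ell}(Y_{(\kappa)}-A)\Omega\|$ small), applies \eqref{eq:dynamicgenerator} to get $(Y_{(\kappa)}\Omega,HA\Omega)+(Y_{(\kappa)}\Omega,A\boldsymbol{\varrho}(G)\Omega)$, replaces the first term by the non-negative $(A\Omega,HA\Omega)$ up to $\epsilon/8$, and in the second term inserts the smeared modular flow and flips: $(A^*Y_{(\kappa)}\Omega,\hat{f}_\lambda(K_\sharp)\boldsymbol{\varrho}(G)\Omega)=(J_\sharp Y_{(\kappa)}^*A\Omega,\Delta_\sharp^{-1/2}\hat{f}_\lambda(K_\sharp)\boldsymbol{\varrho}(G)\Omega)$. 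The vector carrying the estimate is $Y_{(\kappa)}^*A\Omega$, whose norm is at most $\|Y_{(\kappa)}\|\,\|A\Omega\|\le 1$; the operator norm $\|A\|$ enters only in fixing how small $\lambda_0$ must be (through the error of replacing $\boldsymbol{\varrho}(G)\Omega$ by its Gaussian-smoothed version while paired against $A^*Y_{(\kappa)}\Omega$), which is precisely the admissible $\psi$-dependence of $\lambda_0$. Note also that the cutoff must be inserted \emph{before} the modular flip, since $\boldsymbol{\varrho}(G)\Omega\notin{\rm dom}(\Delta_\sharp^{-1/2})$ by Appendix \ref{appendix-B}. If you restructure your commutator step in this asymmetric way, the rest of your argument goes through.
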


\begin{proof}
There is a causally convex open spacetime region $O^\flat$ such that 
$\overline{O^\times} \subset O^\flat$ and $\overline{O^\flat} \subset O^\sharp$ (see Figure \ref{fig}).
By the assumption (f) of local preparability of states there is some operator $Y \in {\sf A}(O^\flat)$ with $\| Y \| = 1$ 
such that 
\begin{align}
  (\psi,A\psi) = (Y\Omega,A Y \Omega) \quad (A \in {\sf A}(O^\times))\,.
\end{align}
As a consequence, on writing $G_t = G \circ \tau_t$, we obtain for sufficiently small $t'' > 0$, 
\begin{align}
 (\psi,\boldsymbol{\varrho}(G_t) \psi) & = 
 (Y\Omega, \boldsymbol{\varrho}(G_t)Y\Omega) \quad \text{and} \\
  (\psi,\boldsymbol{\varrho}(G_t)\boldsymbol{\varrho}(G_{t'}) \psi) & = 
 (Y\Omega, \boldsymbol{\varrho}(G_t)\boldsymbol{\varrho}(G_{t'})
 Y \Omega)
\end{align}
whenever $|t|,|t'| < t''$. This follows from the fact that, if $|t|, |t'| < t''$,
the operators $\boldsymbol{\varrho}(G_t)$ and $\boldsymbol{\varrho}(G_{t'})$ are affiliated with ${\sf A}(O^\times)$ and can be approximated by 
suitable bounded operators; e.g.\ writing 
$R_{t,\delta} = ({\bf 1} +  \delta |\boldsymbol{\varrho}(G_t)|)^{-1}$
for $\delta > 0$ yields 
\begin{align}
 & \lim_{\delta \to 0}\, (Y\Omega, R_{t,\delta}\boldsymbol{\varrho}(G_t)\boldsymbol{\varrho}(G_{t'})R_{t',\delta} 
 Y \Omega) \\ & = \lim_{\delta \to 0}\,(\psi,\boldsymbol{\varrho}(G_t)R_{t,\delta}R_{t',\delta}\boldsymbol{\varrho}(G_{t'})\psi) \nonumber \\
 & = (\psi,\boldsymbol{\varrho}(G_t)\boldsymbol{\varrho}(G_{t'}) \psi) \nonumber 
\end{align}
This also shows that 
$\boldsymbol{\varrho}(G_t)R_{t,\delta}Y\Omega$ as well as $R_{t,\delta}Y\Omega$ converge
as $\delta \to 0$ since 
\begin{align}
 \| \boldsymbol{\varrho}(G_t)(R_{t,\delta} - R_{t,\delta'})Y\Omega \|^2 = 
 \| \boldsymbol{\varrho}(G_t)(R_{t,\delta} - R_{t,\delta'})\psi \|^2 \to 0 \quad (\delta,\delta' \to 0)
\end{align}
and obviously $\|(R_{t,\delta} - {\bf 1})Y\Omega\| \to 0$ as $\delta \to 0$.
Thus, $Y\Omega$ lies in the domain of definition of (the selfadjoint extension of) $\boldsymbol{\varrho}(G_t)$ --- which is closed in the 
graph norm of $\boldsymbol{\varrho}(G_t)$ --- and similarly for $t'$ instead of $t$. Note that, for $|t| < t''$, 
\begin{align}
 \|\boldsymbol{\varrho}(G_t) Y\Omega\| = \| U_t^* \boldsymbol{\varrho}(G)U_t Y\Omega\| = \| \boldsymbol{\varrho}(G)U_t Y\Omega\|
\end{align}
by \eqref{eq:covariance-rho}, so that $U_tY\Omega$ is also in the domain of definition of $\boldsymbol{\varrho}(G)$ for $|t| < t''$.
Consequently, one obtains 
\begin{align}\label{eq:rho-cont-1}
 \| \boldsymbol{\varrho}(G)(U_t - {\bf 1})Y\Omega \| \to 0 \quad (t \to 0) 
\end{align}
since 
\begin{align}
 \| \boldsymbol{\varrho}(G)(U_t & - {\bf 1})Y\Omega \|  =
 \| U^*_t\boldsymbol{\varrho}(G)(U_t - {\bf 1})Y\Omega \| \\
 & = \|  \boldsymbol{\varrho}(G_t)Y\Omega - U_t^* \boldsymbol{\varrho}(G)Y\Omega \| \nonumber \\
 & \le \| ( \boldsymbol{\varrho}(G_t)-
 \boldsymbol{\varrho}(G))Y\Omega\| + 
 \|(U_t^* - {\bf 1})\boldsymbol{\varrho}(G)Y\Omega \| \nonumber \\
 & = \| ( \boldsymbol{\varrho}(G_t)-
 \boldsymbol{\varrho}(G))\psi\| + 
 \|(U_t^* - {\bf 1})\boldsymbol{\varrho}(G)Y\Omega \| \nonumber  \\
 & \to 0 \quad (t \to 0)\,. \nonumber
\end{align}
In what follows, we use a $\delta$-family $h_\kappa$ $(\kappa > 0)$ as introduced
prior to Lemma \ref{Le:Hn-approx}\,.
The property \eqref{eq:rho-cont-1} then allows to conclude
\begin{align} \label{eq:close-in}
 \| \boldsymbol{\varrho}(G)(u_{h_\kappa}(Y) - Y)\Omega\| \to 0 \quad (\kappa \to 0)\,.
\end{align}
To see this, we note that (assuming small enough $\kappa > 0$)
\begin{align}
 \| \boldsymbol{\varrho}(G)(u_{h_\kappa}(Y) & - Y)\Omega\|
 = \left\lVert \, \boldsymbol{ \varrho}(G) \int_{-\infty}^\infty h_\kappa(t) (U_t - {\bf 1})Y\Omega \dif t \,\right\rVert \\
 & \le \left(\int_{-\infty}^\infty h(t) \dif t \right) \sup_{|t| \le \kappa} \,\| \boldsymbol{\varrho}(G)(U_t  - {\bf 1})Y\Omega \| \to 0 \quad (\kappa \to 0) \nonumber
\end{align}
by the properties of the $h_\kappa$.
To simplify notation, we will from now on use the abbreviation
\begin{align}
 Y_{(\kappa)} = u_{h_\kappa}(Y) \quad (\kappa > 0) \,.
\end{align}
We observe that $\| Y_{(\kappa)}\| \le 1$. Furthermore, by \eqref{eq:close-in}
there is for any given $\epsilon >0$ some $\kappa_\epsilon > 0$ so that, for all $0 < \kappa \le \kappa_\epsilon$, one has 
\begin{align} \label{eq:1ststep}
 | \,(\psi,& \boldsymbol{\varrho}(G)\psi) - (Y_{(\kappa)}\Omega,\boldsymbol{\varrho}(G)Y_{(\kappa)}\Omega)\,| < \frac{\epsilon}{2} \\
 & \text{and} \quad Y_{(\kappa)} \in {\sf A}_\infty(O^\sharp)
\end{align}
We recall that, since $(1 + H)^{-\ell}\boldsymbol{\varrho}(G)(1 + H)^{-\ell}$
is bounded, it follows that $\boldsymbol{\varrho}(G)(1 + H)^{-2 \ell}$ is bounded. 
Since $Y_{(\kappa)}\Omega$ is in the $C^\infty$-domain of $H$, by Lemma 
\ref{Le:Hn-approx} that there is an $A \in {\sf A}_\infty(O)$, with 
$\|A\Omega\| = \|Y_{(\kappa)}\Omega\|$, fulfilling
\begin{align} \label{aprx-by-A}
 \| (1 + H)^{2 \ell}(Y_{(\kappa)} - A)\Omega\| < \frac{\epsilon}{8(1 + \|\boldsymbol{\varrho}(G)(1 + H)^{-2\ell}\|)} \,.
\end{align}
Consequently, one obtains 
\begin{align} \label{eq:2ndstep}
 | \, (Y_{(\kappa)}\Omega,\boldsymbol{\varrho}(G)Y_{(\kappa)}\Omega) - 
  (Y_{(\kappa)}\Omega,\boldsymbol{\varrho}(G)A\Omega)\,| < \frac{\epsilon}{8} \,.
\end{align}
In the next step we write, by virtue of \eqref{eq:dynamicgenerator},
\begin{align}
 (Y_{(\kappa)}\Omega,\boldsymbol{\varrho}(G)A\Omega) & = 
 (Y_{(\kappa)}\Omega,[\boldsymbol{\varrho}(G),A]\Omega) +
 (Y_{(\kappa)}\Omega,A\boldsymbol{\varrho}(G)\Omega) \nonumber \\ & =
 (Y_{(\kappa)}\Omega,HA\Omega) + (Y_{(\kappa)}\Omega,A\boldsymbol{\varrho}(G)\Omega) \,.
\end{align}
On the other hand, on account of \eqref{aprx-by-A},
\begin{align} \label{eq:3rdstep}
 |\,(Y_{(\kappa)}\Omega,HA\Omega) - (A\Omega,H A\Omega)\,| & = 
 |\,(H(Y_{(\kappa)} - A)\Omega,A\Omega)\,| < \frac{\epsilon}{8} \, ,
\end{align}
having used $\|A\Omega\| = \|Y_{(\kappa)}\Omega\| \le \|Y\Omega\| = 1$.
\medskip

\noindent Since the $\Delta_\sharp^{is} = {\rm e}^{is K_{\sharp}}$ $(s \in \mathbb{R})$ form a continuous unitary group, there is some $\lambda_0 > 0$
such that for $0 < \lambda < \lambda_0$ 
\begin{align} \label{eq:4thstep}
 \left|\, (Y_{(\kappa)}\Omega,A\boldsymbol{\varrho}(G)\Omega) - 
 \int_{-\infty}^\infty f_\lambda(s)\,(Y_{(\kappa)}\Omega, A \Delta_\sharp^{is} \boldsymbol{\varrho}(G)\Omega) \dif s \,\right|
  < \frac{\epsilon}{8}
\end{align}
with the scaled Gaussian $f_\lambda$ from \eqref{eq:flambda}. On the other hand, since both $Y_{(\kappa)}$ and $A$ are contained in ${\sf A}(O^{\sharp})$, and $J_\sharp$ and $\Delta^{1/2}_\sharp$ are the Tomita-Takesaki modular objects associated to the pair $({\sf A}(O^\sharp),\Omega)$, we now obtain
\begin{align}
 \int_{-\infty}^\infty f_\lambda(s)\,(Y_{(\kappa)}\Omega, A \Delta_\sharp^{is} \boldsymbol{\varrho}(G)\Omega) \dif s & = (A^*Y_{(\kappa)}\Omega, 
 \hat{f}_\lambda(K_\sharp)\boldsymbol{\varrho}(G)\Omega) \nonumber \\
 & = (J_\sharp \Delta^{1/2}_\sharp Y^*_{(\kappa)}A\Omega,
 \hat{f}_\lambda(K_\sharp)\boldsymbol{\varrho}(G)\Omega)  \nonumber \\
 & = (\Delta^{-1/2}_\sharp J_\sharp Y^*_{(\kappa)}A\Omega,
 \hat{f}_\lambda(K_\sharp)\boldsymbol{\varrho}(G)\Omega) \nonumber \\
& =  (J_\sharp Y_{(\kappa)}^* A\Omega,\Delta_\sharp^{-1/2}\hat{f}_\lambda(K_\sharp)\boldsymbol{\varrho}(G)\Omega)
\label{eq:rewrite-modular}
\end{align}
where the general property $J_\sharp \Delta^{1/2}_\sharp = \Delta^{-1/2}_\sharp J_\sharp$ for the Tomita-Takesaki modular objects has been used. 
Observing $\|J_\sharp Y_{(\kappa)}^*A\Omega\| \le 1$, we conclude that the 
modulus of the last term of the previous series of equations can be 
estimated by $\|\Delta_\sharp^{-1/2}\hat{f}_\lambda(K_\sharp)\boldsymbol{\varrho}(G)\Omega\|$.
\medskip

\noindent Finally, combining \eqref{eq:1ststep}, \eqref{eq:2ndstep}, \eqref{eq:3rdstep}
and \eqref{eq:4thstep},
we find
\begin{align}
 \mbox{\Large $|$} \,(\psi,\boldsymbol{\varrho}(G)\psi)  & - [\,(A\Omega,HA\Omega) + (J_\sharp Y^*_{(\kappa)}A\Omega, \Delta^{-1/2}_\sharp \hat{f}_\lambda(K_\sharp) \boldsymbol{\varrho}(G)\Omega)\,]\,\mbox{\Large $|$} \nonumber \\ & <
 \frac{\epsilon}{2} + \frac{\epsilon}{8} + \frac{\epsilon}{8} + \frac{\epsilon}{8} < \epsilon\,.
\end{align}
Since $(A\Omega,HA\Omega) \ge 0$, and by the previous estimate on the final expression $|(J_\sharp Y_{(\kappa)}^* A\Omega,\Delta_\sharp^{-1/2}\hat{f}_\lambda(K_\sharp)\boldsymbol{\varrho}(G)\Omega)|$ in \eqref{eq:rewrite-modular}, we arrive at 
\begin{align}
 (\psi,\boldsymbol{\varrho}(G)\psi) \ge - \epsilon - \|\Delta_\sharp^{-1/2}\hat{f}_\lambda(K_\sharp)\boldsymbol{\varrho}(G)\Omega\|
\end{align}
for all $0 < \lambda < \lambda_0$, as stated.
\end{proof}

\section{Concluding discussion}
\label{sec:discussion}

We have seen that in quantum field theory on any static, globally hyperbolic spacetime, relative lower bounds for certain types of spacetime averaged energy density can be established under very general, model-independent assumptions.\medskip

Some comments about the role of the parameters $\epsilon$ and $\lambda_0$ in Theorem \ref{Thm:3rd} are in order. As can be observed in the proof, any smaller choice
of $\epsilon$ will result in a smaller $\lambda_0$ in order for the statement 
of Theorem \ref{Thm:3rd} to be fulfilled. However, how small $\lambda_0$ must be made 
also depends on $\psi$, or rather, the choice of $A \in {\sf A}_\infty(O)$ with
\eqref{aprx-by-A}. As a matter of fact, similar to Theorem \ref{Thm:First}, its operator norm $\| A \|$ again is a controlling factor of 
a lower bound, since for any given $\epsilon' > 0$ it holds that 
\begin{align}
 \left| \, (Y_{(\kappa)}\Omega,A\boldsymbol{\varrho}(G)\Omega) - 
  \int_{-\infty}^\infty f_\lambda(s) (Y_{(\kappa)}\Omega A \Delta_\sharp^{is}\boldsymbol{\varrho}(G)\Omega) \dif s \,\right| < \epsilon' 
\end{align}
as soon as $\lambda$ is small enough so that 
\begin{align}
 \| ({\bf 1} - \hat{f}_\lambda(K_\sharp))\boldsymbol{\varrho}(G)\Omega \| < \frac{\epsilon'}{\| A \|} \,.
\end{align}
However, we think that controlling the lower bound on the spacetime averaged energy density $\boldsymbol{\varrho}(G)$ by the operator norm $\| A \|$ as in 
Theorem \ref{Thm:First} is very likely a rather crude estimate. In view of the 
fact that the operator norm of an operator $A \in {\sf A}_\infty(O)$ fulfilling
\eqref{aprx-by-A} is very hard to control, the result of Theorem \ref{Thm:3rd} 
offers a better controllable a priori lower bound on $\boldsymbol{\varrho}(G)$, in particular if the modular group $\{\Delta^{is}_\sharp\}_{s \in \mathbb{R}}$ acts geometrically like in the Bisognano-Wichmann theorem \cite{BiWi}. Nevertheless, the result of Appendix \ref{appendix-B} shows that one cannot expect a state-independent lower bound, i.e.\ $\lambda_0$ in the statement of Theorem \ref{Thm:3rd} is manifestly dependent on $\psi$ (or, more precisely, on the choices made for $Y_{(\kappa)}$ and $A$ in the proof).\medskip

Still, the universal form of the lower bound on the spacetime averaged density obtained in Theorem \ref{Thm:3rd} is remarkable. In the case that $\Delta^{is}_\sharp$ acts geometrically, the analogy of the result of Theorem
\ref{Thm:3rd} with a (relative) QWEI is even more palpable. Consider $\boldsymbol{\varrho}(G) = \boldsymbol{T}(G e_0 \otimes e_0)$ for stress-energy tensor $\boldsymbol{T}$ and normalized timelike Killing vector field $e_0$ (cf. \eqref{eq:stress_00}). Assume that there exists a timelike Killing flow $\Phi_s$ $(s \in \mathbb{R})$ on the region $O^\sharp$ of the underlying static spacetime so that 
\begin{align}
	\Delta_\sharp^{is} \boldsymbol{T}(G e_0 \otimes e_0) \Delta_\sharp^{-is} &= \boldsymbol{T}((\Phi_s)_*(G e_0 \otimes e_0)) \nonumber \\ &= \boldsymbol{T}((G\circ\Phi_{-s}) (\Phi_s)_*(e_0 \otimes e_0)) \, ,
\end{align}
where $(\Phi_s)_*$ denotes the pushforward of the diffeomorphism $\Phi_s$ on contravariant tensor fields of rank $2$ (see e.g. Appendix C of \cite{Wald-GR}). In such a case, one can choose, given any $\epsilon_1 > 0$, a $\lambda_0 > 0$ such that
\begin{align}
\label{eq:ineq-f-lambda}
 | (\psi,\boldsymbol{\varrho}(G)\psi) - (\psi, \boldsymbol{\varrho}_{f_\lambda} (G)\psi) | < \epsilon_1 \quad (0 < \lambda < \lambda_0)
\end{align}
for $f_\lambda$ from \eqref{eq:flambda}, where we have used the notation
\begin{align}
 \boldsymbol{\varrho}_{f_\lambda} (G) = \boldsymbol{T}\left( \int_{-\infty}^\infty f_\lambda(s)\, (G\circ\Phi_{-s}) (\Phi_s)_*(e_0 \otimes e_0) \dif s \right) \,,
\end{align}
with the integral to be interpreted in a suitable test function space, such as $\mathscr{S}(\mathbb{R}) \otimes C_0^\infty(\Sigma)$. Then the result of Theorem \ref{Thm:3rd} together with \eqref{eq:ineq-f-lambda} implies that, for given unit vector $\psi \in \mathcal{D}$ and $\epsilon > 0$, there is $\lambda_0 > 0$ so that 
\begin{align}
 (\psi,\boldsymbol{\varrho}_{f_\lambda} (G) \psi) \ge - \epsilon - 
 \| \Delta_\sharp^{-1/2} \boldsymbol{\varrho}_{f_\lambda} (G)\Omega\|
\end{align}
for all $0 < \lambda < \lambda_0$. This bears a striking similarity to a QWEI in which the expected energy density averaged along the trajectories of a congruence
of timelike curves is bounded below by a quantity which becomes negatively divergent if the averaging function (here: $f_\lambda$) becomes $\delta$-peaked (here: the limit $\lambda \to 0$). However, it actually resembles a relative QWEI bound since $\lambda_0$ depends on $\psi$, as discussed. Observe also that there is no
lower bound on the overall extension of the support of $G$. 

We also mention that the lower bounds established in Theorems \ref{Thm:First} and \ref{Thm:3rd} apply, in principle, to other local generators of the time translations in place of $\boldsymbol{\varrho}(G)$, like those obtained from the split property \cite{BuDoLo};\footnote{We thank Roberto Longo for pointing that out to us.} there may be some domain issues which would have to be clarified for asserting a rigorous result in such a case.\medskip

The lower bounds on the spacetime averaged energy established in the Theorems \ref{Thm:First} and \ref{Thm:3rd} use minimal, model-independent assumptions and have, as mentioned, a universal form. Thus their generality has the drawback of not being as specific as bounds that use model-dependent properties. In particular, they do not, for example, reveal the much stronger, state-independent QWEIs for the minimally coupled quantized Klein-Gor\-don field, or the Dirac field. The approach of splitting $(\psi,\boldsymbol{\varrho}(G)\psi) = (A\Omega,H A\Omega) + (A\Omega,A\boldsymbol{\varrho}(G)\Omega)$, as in \eqref{eq:pos-split} in the proof
of Theorem \ref{Thm:First}, into a positive part and a remainder is an obvious step in view of the assumptions made, but in the free field models it seems to take away too much of a positive contribution from the remainder term. 

One may hope that progress on a more detailed control of lower bounds on spacetime averaged energy densities in general quantum field theory may be made by combining operator product expansion techniques \cite{BosFew} and, e.g., conditions of modular nuclearity \cite{LeSa,BuDALo}.

\section*{Acknowledgements}

The authors would like to thank Daniela Cadamuro, Chris\ Fewster and Ko\ Sanders for discussions on topics of the present work. AGP thanks the IMPRS and MPI for Mathematics in the Sciences, Leipzig, for funding and support.

%%%%%%%%%%%%%%%%%%%%%%%%% APPENDIX %%%%%%%%%%%%%%%%%%%%%%%%%%%%

\appendix

\renewcommand{\thesubsection}{\Alph{subsection}}  % label appendix subsections alphabetically
\renewcommand{\theequation}{\thesubsection.\arabic{equation}}  % label equations in appendix after subsection
\numberwithin{Definition}{subsection}  % label theorem environments in appendix after subsection

\section*{Appendix}

\subsection{Proof of Theorem \ref{Thm:2nd}}
\label{appendix-A}

The quantization of the free scalar field on a static, globally hyperbolic spacetime
is very much established lore and we will be brief in our presentation. Standard references include \cite{Kay78,KayWald,Wald-QFTCST,Ful}. For the sake of notational simplicity we will restrict our attention to an ultrastatic spacetime, i.e.\ $M$ with metric $g$ given by \eqref{eq:metric} for $\alpha\equiv 1$, however the arguments for the general case of a static, globally hyperbolic spacetime are analogous up to a slightly more elaborate notation. The Klein-Gordon operator, viewed as mapping $C^\infty(M,\mathbb{R})$ 
into itself, is in the ultrastatic case given by ${\tt K} = \partial_t^2 - \Delta + \mu$
where $\Delta$ denotes the Laplace operator of the $d$-dimensional 
Riemannian manifold $(\Sigma,h)$ and $\mu \ge 0$ is a constant. We assume 
that $Q = - \Delta + \mu$ is invertible on $C_0^\infty(\Sigma,\mathbb{R})$ 
(``absence of zero modes'') which may require $\mu >0$ depending on $(\Sigma,h)$. The operator ${\tt K}$ admits unique advanced and retarded fundamental solutions ${\tt E}^{\rm av/rt}: C_0^\infty(M,\mathbb{R}) \to C^\infty(M,\mathbb{R})$, and their difference is 
${\tt E} = {\tt E}^{\rm av} - {\tt E}^{\rm rt}$. Then the factor space 
$L = C_0^\infty(M,\mathbb{R})/{\rm ker}({\tt E})$ is a symplectic space with 
symplectic form $\varsigma([F],[F']) = \int_M F(x) ({\tt E} F')(x)\dif{\rm vol}_g(x)$
for any $F,F' \in C_0^\infty(M,\mathbb{R})$, having denoted the canonical surjection
$C_0^\infty(M,\mathbb{R}) \to L$ by $F \mapsto [F]$ and the metric-induced volume form on 
$M$ by $\dif{\rm vol}_g$. There is also, for every $t \in \mathbb{R}$, the symplectic 
space ${\tt D}^{(t)} = C_0^\infty(\Sigma_t,\mathbb{R}) \oplus C_0^\infty(\Sigma_t,\mathbb{R})$ with symplectic form ${\tt d}^{(t)}( {\tt u} \oplus {\tt v},{\tt w} \oplus {\tt y}) = \int_{\Sigma_t} ({\tt uy} - {\tt vw})\dif{\rm vol}_h$ with the metric-induced volume form of $(\Sigma,h)$ as integration measure. For every $t \in \mathbb{R}$, there is a canonical symplectomorphism $P_t$ from $(L,\varsigma)$ to $({\tt D}^{(t)},{\tt d}^{(t)})$. 

The quantization proceeds by assigning to the symplectic space $(L,\varsigma)$ the 
$C^*$ Weyl algebra $\mathfrak{A}$ which is generated by a unit ${\bf 1}$ and 
a family of elements ${\bf w}([F])$ for $[F] \in L$, required to fulfill the relations
\begin{align}
 & {\bf w}([F])^*{\bf w}([F]) = {\bf 1}\,, \quad {\bf w}([F])^* = {\bf w}(-[F])\,, \\
 & {\bf w}([F]){\bf w}([F']) = {\rm e}^{-i\varsigma([F],[F'])/2} \, {\bf w}([F] + [F'])\,.
\end{align}
Local algebras are then obtained by defining $\mathfrak{A}(O)$ as the $C^*$-subalgebra 
of $\mathfrak{A}$ generated by the ${\bf w}([F])$ where ${\rm supp}(F) \subset O$, for
any open subset $O$ of $M$.

There is a quasifree ground state $\omega_0$ which is given by a complex scalar 
product $\Lambda_0$ over $L$. It is best described in terms of the symplectic 
space ${\tt D}^{(0)}$, i.e.\ we have chosen $t = 0$ for simplicity. 
The definition is 
\begin{align}
 \Lambda_0({\tt u} \oplus {\tt v},{\tt w}
\oplus {\tt y}) = \langle Q^{1/4}{\tt u} + iQ^{-1/4}{\tt v}, Q^{1/4}{\tt w} + iQ^{-1/4}{\tt y} \rangle
\end{align}
where 
\begin{align}
 \langle f,f'\rangle = \int_{\Sigma} \overline{f} f' \dif{\rm vol}_h \quad (f,f' \in L^2(\Sigma,\dif{\rm vol}_h)\,)
\end{align}
is the scalar product of the (complex) Hilbert space $L^2(\Sigma,\dif{\rm vol}_h)$.
One can check that ${\rm Im}\,\Lambda_0 = {\tt d}^{(0)}$, and therefore\footnote{At this point as well as in what follows, we identify $[F] \in L$
and ${\tt u} \oplus {\tt v} = P_0([F])$ mostly without explicitly writing the map
$P_0$ between $L$ and ${\tt D}^{(0)}$ in order to simplify the notation. We trust
that doing so won't lead to ambiguities.}
\begin{align}
 \omega_0({\bf w}({\tt u} \oplus {\tt v})) = {\rm e}^{-\Lambda_0({\tt u} \oplus {\tt v},{\tt u} \oplus {\tt v})/2}
\end{align}
defines a quasifree state $\omega_0$ on $\mathfrak{A}$. Its GNS representation $(\mathcal{H},\pi,\Omega)$ is a Fock representation \cite{KayWald}; thus, $\mathcal{H}$ equals the symmetric 
Fock space over the one-particle Hilbert space $\mathcal{H}_1 = L^2(\Sigma,\dif{\rm vol}_h)$, and $\Omega$ is the Fock vacuum vector. We denote the represented Weyl algebra
generators by 
\begin{align}
 W({\tt u} \oplus {\tt v}) = \pi({\bf w}({\tt u} \oplus {\tt v}))\,.
\end{align}
We write $\chi_{{\tt u} \oplus {\tt v}} = Q^{1/4}{\tt u} + iQ^{-1/4}{\tt v}$, and 
$a^+(\chi_{{\tt u} \oplus {\tt v}})$ for the creation operator of the vector 
$\chi_{{\tt u} \oplus {\tt v}} \in \mathcal{H}_1$. The notation for the annihilation
operator is similar, without superscript `$+$'. Using this notation, one finds
\begin{align}
 W({\tt u} \oplus {\tt v}) = {\rm e}^{i(a^+(\chi_{{\tt u} \oplus {\tt v}}) + 
 a(\chi_{{\tt u} \oplus {\tt v}}))}\,.
\end{align}
Furthermore, one can define the one-particle Hamilton operator $H_1 = Q^{1/2}$ and 
$H = d\Gamma(H_1)$, the second quantization of $H_1$ (in the notation of \cite{BraRob2}). The time evolution is on ${\tt D}^{(0)}$ given by $T_t P_0 [F] = P_0 [F \circ \tau_{-t}]$, and it holds that 
\begin{align}
 U_t W({\tt u} \oplus {\tt v})U_t^* = W(T_t({\tt u} \oplus {\tt v}))
\end{align}
with $U_t = {\rm e}^{itH}$. Clearly,  $U_t \Omega = \Omega$ holds, as well as 
$H \ge 0$. 

Given any open subset $S$ of $\Sigma_0$, one defines ${\sf A}(O)$ as the von Neumann
subalgebra of $\mathcal{B}(\mathcal{H})$ generated by the $W({\tt u} \oplus {\tt v})$
with ${\rm supp}({\tt u})$ and ${\rm supp}({\tt v})$ contained in $S$ for 
$O = D(S)$. This means ${\sf A}(O) = \pi(\mathfrak{A}(O))''$. 

The state $\omega_0$ is a quasifree Hadamard state, and also the 
coherent states $\omega_{({\tt u} \oplus {\tt v})}(\,\cdot\,) = (W({\tt u} \oplus {\tt v})\Omega,(\,\cdot\,)W({\tt u} \oplus {\tt v})\Omega)$
are Hadamard states. Here, we use $(\Psi,\Psi')$ to denote
the scalar product of vectors $\Psi,\Psi' \in \mathcal{H}$. For Hadamard states,
the expectation value of the quantized stress-energy tensor, and more specifically,
of the energy density, can be defined by a `point-splitting' procedure which agrees 
with the `normal ordering' prescription in the Fock space representation of the quasifree ground state we have at hand here. We will not discuss this at this point 
and instead refer to the references \cite{Wald-QFTCST,FV-Pass} where the matter is presented in detail. 

\begin{proof}[{Proof of Theorem \ref{Thm:2nd}}]
We can to a large amount rely on results of \cite{FV-Pass} (see, in particular, Appendix A of that work). It was shown in this reference that the energy density $\boldsymbol{\varrho}(G)$, defined according to the `point-splitting' procedure, exists as a quadratic form on Hadamard states, and fulfills 
\begin{align}
 &(W([F])\Omega,\boldsymbol{\varrho}(G)W([F])\Omega) \nonumber \\ & = (W([F])\Omega,HW([F])\Omega) + (W([F])\Omega,W([F])\boldsymbol{\varrho}(G)\Omega)
\end{align}
and
\begin{align}
 &(u_h(W([F]))\Omega,\boldsymbol{\varrho}(G)u_h(W([F]))\Omega) \nonumber \\ & = (u_h(W([F]))\Omega,H u_h(W([F]))\Omega) + (u_h(W([F]))\Omega,u_h(W([F]))\boldsymbol{\varrho}(G)\Omega)
\end{align}
provided that $W([F])$ and $u_h(W([F]))$ with 
$h \in C_0^\infty(\mathbb{R})$ and $F \in C_0^\infty(M,\mathbb{R})$
are contained in ${\sf A}(O)$, and 
with localization properties of $O$ relative to ${\rm supp}(G)$ as assumed 
for Theorem \ref{Thm:2nd}. Here, one uses the divergence-freeness of the 
expected stress-energy tensor, as mentioned in Section \ref{sec:assump}.
Furthermore, one can use (A.6) and (A.16) in \cite{FV-Pass} to conclude
that 
\begin{align}
 (W({\tt u} \oplus {\tt v})\Omega,H W({\tt u} \oplus {\tt v})\Omega) = 
 \frac{1}{2} \left( \langle {\tt u},Q{\tt u}\rangle + \langle{\tt v},{\tt v} \rangle \right)\,,
\end{align}
a well-known result. This means that we can choose, e.g., an open region 
$O_1$ with $\overline{O_1} \subset O$, and a sequence of unitaries 
$A_m = W(m{\tt u} \oplus m{\tt v})$ $(m \in \mathbb{N})$ in ${\sf A}(O_1)$ so that 
\begin{align}
 (A_m\Omega&,\boldsymbol{\varrho}(G)A_m\Omega) = (A_m\Omega,HA_m\Omega) + (A_m\Omega,A_m\boldsymbol{\varrho}(G)\Omega)
\nonumber  \\  & = 
 \frac{m^2}{2} \left( \langle {\tt u},Q{\tt u}\rangle + \langle{\tt v},{\tt v} \rangle \right) + (A_m^*A_m\Omega,\boldsymbol{\varrho}(G)\Omega) \to \infty \quad (m \to \infty)\,.
\end{align}
Using a variation of the arguments in Lemma \ref{Le:Hn-approx}, one can find a sequence of 
positive numbers $\kappa(m)$ converging to $0$ for $m \to \infty$ so that 
$\|(1 + H)(A_m - u_{h_{\kappa(m)}}(A_m))\Omega\| \to 0$ as $m \to \infty$.
Then the sequence of operators $u_{h_{\kappa(m)}}(A_m)$ ($m$ starting at high enough value) is 
contained in ${\sf A}_\infty(O)$ and is norm bounded by 1. 
Invoking again the arguments of Lemma \ref{Le:Hn-approx},
the sequence of 
operators $B_m = u_{h_{\kappa(m)}}(A_m)/\|u_{h_{\kappa(m)}}(A_m)\Omega\|$
is contained in ${\sf A}_\infty(O)$, and it fulfills $\|B_m\Omega\| = 1$
and $\| B_m \| \le 1 + \epsilon$ for sufficiently large $m$. It also fulfills
\begin{align}
 (B_m\Omega,\boldsymbol{\varrho}(G)B_m\Omega) = (B_m\Omega,HB_m\Omega) + (B_m^*B_m\Omega,\boldsymbol{\varrho}(G)\Omega) \to \infty \quad (m \to \infty)
\end{align}
in which the second term remains bounded in $m$. This proves the theorem. 
\end{proof}

\subsection{{$\protect\boldsymbol{\varrho}(G)\Omega$ is not in $\protect\mathrm{dom}(\Delta_\sharp^{-1/2})$}}
\label{appendix-B}

We give an argument illustrating that in general one cannot expect  $\boldsymbol{\varrho}(G)\Omega$ 
to be in the domain of $\Delta_\sharp^{-1/2}$ under the assumptions of 
Theorem \ref{Thm:3rd}.\footnote{The 3rd named author wishes to thank Daniele Guido for
having pointed out that fact, and outlined an argument to that effect, a long time ago.}

To this end, we consider a quantum field theory on $(1 + d)$-dimensional Minkowski spacetime, $M = \mathbb{R}^{1 + d}$, in a vacuum representation with Hilbert space 
$\mathcal{H}$, which carries a continuous unitary representation of the proper, orthochronous Poincar\'e group that acts covariantly on the family of local 
von Neumann algebras ${\sf A}(O)\subset\mathcal{B}(\mathcal{H})$ $(O \subset M)$, leaves the vacuum vector $\Omega\in\mathcal{H}$ invariant, and fulfills the relativistic spectrum condition. Let $W^R = \{(x^0,x^1,\ldots,x^d) \in M : 0 < |x^0| < x^1 \}$ denote the right wedge region, and denote by $\{U_s\}_{s \in \mathbb{R}}$, for $U_s = U(\Lambda_s^R)$, the one-parameter unitary subgroup of the representation of the proper, orthochronous Poincar\'e group implementing the Lorentz boosts 
\begin{align}
 & \Lambda_s^R (x^0,x^1,x^2,\ldots ,x^d) \\ & = (\cosh(s)x^0 - \sinh(s)x^1,-\sinh(s)x^0 + \cosh(s)x^1,x^2,\ldots,x^d) \nonumber
\end{align}
which leave $W^R$ invariant. 

The vacuum vector $\Omega$ is cyclic and separating for ${\sf A}(W^R)$. The Tomita-Takesaki modular group associated to the pair $({\sf A}(W^R),\Omega)$ will be denoted by $\{\Delta_R^{is}\}_{s \in \mathbb{R}}$ with $\Delta_R^{is}={\rm e}^{isK_R}$, and we assume that it acts geometrically as 
\begin{align} \label{eq:BiWi}
 \Delta_R^{is} = U_{2 \pi s} \quad (s \in \mathbb{R})\,. 
\end{align}
In other words, the result of the Bisognano-Wichmann theorem \cite{BiWi} is assumed. It holds whenever the local algebras are generated by bounded 
functions of a (scalar) quantum field. If that is the case, also the {\it timelike tube theorem} holds, which in our context means that if $O_1$ is any open subset of $M$ with 
$\overline{O_1} \subset W^R$, then the von Neumann algebra generated by 
all ${\sf A}(\Lambda_s^R(O_1))$, as $s$ ranges over $\mathbb{R}$, coincides with 
${\sf A}(W^R)$ \cite{Bor61}. Furthermore, if the $\Delta_R^{is}$ act geo\-metrically as in
\eqref{eq:BiWi}, and if the vacuum representation of the quantum field theory
is irreducible, meaning that ${\sf A}(M) = \mathcal{B}(\mathcal{H})$, then 
${\sf A}(W^R)$ is a factor, i.e.\ ${\sf A}(W^R) \cap {\sf A}(W^R)' = \mathbb{C}{\bf 1}$ \cite{Bor-Revol}.  Irreducibility of the vacuum representation is in fact a natural assumption, equivalent to
uniqueness of the vacuum vector up to phase. As mentioned, so are the other
assumptions entering the following proposition. 

In that proposition, $G \mapsto \boldsymbol{\varrho}(G)$ $(G \in C_0^\infty(\mathbb{R}^{1 + d}))$ is an operator-valued distribution fulfilling the previously given assumptions (A)--(D) for the 
case that $M = \mathbb{R}^{1 +d}$. (Assumption (E) is not required.) The 
Hamilton operator $H$ is the selfadjoint generator of the unitary representation
of the time shifts with respect to the inertial time coordinate $x^0$. (The unitary operators implementing the inertial, or ``static time direction'' shifts
have previously been denoted by $U_t = {\rm e}^{itH}$, which amounts to a slight abuse of notation given that the unitary operators implementing the Lorentz boosts are now denoted by $U_s = {\rm e}^{isK_R/2\pi}$. We trust that the reader won't be confused by this shift in notation.)

\begin{Proposition}
 Suppose that (i) the $\Delta^{is}_R$ act geometrically as in \eqref{eq:BiWi},
 (ii) the timelike tube theorem holds, and (iii) ${\sf A}(W^R)$ is a factor.
 \\[6pt]
If $\boldsymbol{\varrho}(G)\Omega$ is contained 
 in the domain of $\Delta_R^{-\delta}$ for some $\delta >0$,
 where $G \in C_0^\infty(\mathbb{R}^{1 +d},\mathbb{R})$ with ${\rm supp}(G) \subset W^R$, then 
 $\boldsymbol{\varrho}(G) = r{\bf 1}$ for some $r \in \mathbb{R}$.
\end{Proposition}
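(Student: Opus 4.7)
The plan is to show that $\boldsymbol{\varrho}(G)$ is affiliated with ${\sf A}(W^R)'$ in addition to ${\sf A}(W^R)$; by the factor property (iii) it would then be affiliated with ${\sf A}(W^R) \cap {\sf A}(W^R)' = \mathbb{C}\mathbf{1}$, forcing $\boldsymbol{\varrho}(G) = r\mathbf{1}$ for some $r\in\mathbb{R}$. By the timelike tube theorem (ii), this reduces to verifying that $\boldsymbol{\varrho}(G)$ commutes with every $B_s := \Delta_R^{is}B\Delta_R^{-is} = U_{2\pi s}BU_{-2\pi s}$ for $s\in\mathbb{R}$ and every bounded $B\in{\sf A}(O_1)$, where $O_1$ is a fixed open region with $\overline{O_1}\subset W^R$ chosen spacelike to ${\rm supp}(G)$. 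For $|s|$ sufficiently small, $\Lambda_{2\pi s}^R(O_1)$ remains spacelike to ${\rm supp}(G)$ and commutation follows from locality; the work lies in propagating this to all real $s$ by an analytic continuation argument built on the Bisognano-Wichmann relation $\Delta_R^{is} = U_{2\pi s}$.

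Concretely, I would study the scalar function
\begin{align}
\psi_B(s) := (\Omega, [B_s, \boldsymbol{\varrho}(G)]\Omega) = (B^*\Omega, \Delta_R^{-is}\xi) - (\xi, \Delta_R^{is}B\Omega),
\end{align}
with $\xi := \boldsymbol{\varrho}(G)\Omega$, the second equality using self-adjointness of $\boldsymbol{\varrho}(G)$ and the invariance $\Delta_R^{-is}\Omega = \Omega$. By the standing hypothesis $\xi\in{\rm dom}(\Delta_R^{-\delta})$, the vector-valued map $s\mapsto\Delta_R^{-is}\xi$ extends holomorphically to $\{-\delta < {\rm Im}(s) < 0\}$; by the Tomita-Takesaki identity $\Delta_R^{1/2}B\Omega = J_R B^*\Omega$ for bounded $B\in{\sf A}(W^R)$, the map $s\mapsto\Delta_R^{is}B\Omega$ extends holomorphically to $\{-1/2 < {\rm Im}(s) < 0\}$. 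Thus $\psi_B$ has a holomorphic extension to the common strip $S := \{-\min(\delta,1/2) < {\rm Im}(s) < 0\}$, continuous up to the real boundary, and vanishes on a real interval by the locality argument above; the Schwarz reflection principle together with the identity theorem then force $\psi_B\equiv 0$ on $S$, and by continuity $\psi_B(s) = 0$ for every real $s$.

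To upgrade this scalar identity to the operator commutation $[B_s, \boldsymbol{\varrho}(G)] = 0$, I would repeat the analytic continuation with the outer vectors $\Omega$ replaced by $C_1\Omega$ and $C_2\Omega$ for $C_1, C_2 \in {\sf A}_\infty(W^L)$. Since $C_1, C_2\in {\sf A}(W^R)'$ commute with both $B_s$ and with $\boldsymbol{\varrho}(G)$, a short rearrangement reduces $(C_1\Omega, [B_s, \boldsymbol{\varrho}(G)]C_2\Omega)$ to an expression structurally identical to $\psi_B$, which the same argument shows vanishes for all real $s$. Reeh-Schlieder cyclicity of $\Omega$ for ${\sf A}(W^L)$ supplies a dense family of test vectors, and passing to bounded spectral truncations $E_n\boldsymbol{\varrho}(G)E_n\in {\sf A}(W^R)$ promotes the vanishing sesquilinear form to strict operator commutation of $B_s$ with every spectral projection of $\boldsymbol{\varrho}(G)$. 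Varying $s$ and $B$ and invoking the timelike tube theorem yields the desired affiliation, and the conclusion follows. The principal obstacle is the domain bookkeeping in the analytic continuation: aligning the one-sided modular-domain information on $\xi$ supplied by the hypothesis with the automatic Tomita-Takesaki domain for $B\Omega$ so that the two extensions share a common strip, and controlling the unbounded composition of $\boldsymbol{\varrho}(G)$ with the boosted operators $B_s$ during the upgrade to operator commutation.
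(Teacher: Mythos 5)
Your overall strategy is the same as the paper's: place $O_1$ spacelike to ${\rm supp}(G)$, get commutation with the boosted operators $B_s$ for small real $s$ from locality, extend to all real $s$ by analytic continuation in the boost parameter using \eqref{eq:BiWi} together with the hypothesis $\boldsymbol{\varrho}(G)\Omega\in{\rm dom}(\Delta_R^{-\delta})$, and finish with the timelike tube theorem and factoriality. Your single-variable continuation of $\psi_B$ is correct and in fact cleanly isolates the mechanism: both terms continue holomorphically into the \emph{same} strip $\{-\min(\delta,1/2)<{\rm Im}(s)<0\}$ only because of the extra domain hypothesis on $\boldsymbol{\varrho}(G)\Omega$ (for a generic bounded $X\in{\sf A}(W^R)$ the two terms of $(\Omega,[B_s,X]\Omega)$ continue into \emph{opposite} half-strips, which is why nothing is forced to vanish in that case), and Schwarz reflection plus the identity theorem then give $\psi_B\equiv 0$ on $\mathbb{R}$. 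The paper instead continues multi-variable functions built from products $A_1(s_1')\cdots A_N(s_N')$; you avoid products by aiming at operator commutation for each single $B_s$, which is a legitimate reorganization \emph{provided} the operator commutation is actually established.

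That is where the gap sits. Because $C_1,C_2\in{\sf A}_\infty(W^L)$ commute with both $B_s$ and the spectral resolution of $\boldsymbol{\varrho}(G)$, your matrix elements $(C_1\Omega,[B_s,\boldsymbol{\varrho}(G)]C_2\Omega)$ collapse to $(C\Omega,B_s\boldsymbol{\varrho}(G)\Omega)-(\boldsymbol{\varrho}(G)C\Omega,B_s\Omega)$ with $C=C_2^*C_1$, so Reeh--Schlieder density buys you only the vector identity $[B_s,\boldsymbol{\varrho}(G)]\Omega=0$ (legitimate for $B\in{\sf A}_\infty(O_1)$, since then $B_s\Omega\in{\rm dom}(\boldsymbol{\varrho}(G))$), not commutation of $B_s$ with the spectral projections. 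Your proposed shortcut via the truncations $E_n\boldsymbol{\varrho}(G)E_n$ does not close this: you cannot rerun the continuation argument for $E_n\boldsymbol{\varrho}(G)E_n$ because $E_n\boldsymbol{\varrho}(G)E_n\Omega$ does not inherit membership in ${\rm dom}(\Delta_R^{-\delta})$, and the weak identity for $\boldsymbol{\varrho}(G)$ itself yields commutation with $E_n$ only if ${\sf A}_\infty(W^L)\Omega$ is a core for $\boldsymbol{\varrho}(G)$, which the standing assumptions do not guarantee (the domain $\mathcal{D}$ of essential selfadjointness need not lie in ${\rm dom}((1+H)^{2\ell})$). The paper fills exactly this hole with a separate endgame: it propagates the vacuum identity to all powers $\boldsymbol{\varrho}(G)^q$ (possible because these remain $H$-bounded), uses that $\Omega$ is separating for ${\sf A}(W^R)$ to convert the resulting weak identities into the operator identities $EAE\,\boldsymbol{\varrho}(G)^q-\boldsymbol{\varrho}(G)^q EAE=0$ for finite-interval spectral projections $E$, and then approximates $\lambda\mapsto\lambda({\bf 1}+a\lambda^2)^{-1}$ by polynomials on the truncated spectrum to land in ${\sf A}(W^R)\cap{\sf A}(W^R)'=\mathbb{C}{\bf 1}$. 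Splicing that endgame into your argument (with each $B_s$ playing the role of the paper's $A$) completes the proof; as written, the decisive step is asserted rather than proved.
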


\begin{proof}
As $\boldsymbol{\varrho}(G)\Omega$ lies naturally in the domain of $\Delta_R^{1/2}$, and also 
lies in the domain of $\Delta_R^{-\delta}$ by assumption,
 it therefore lies in the domain of all $\Delta_R^\gamma$, $-\delta \le \gamma \le 1/2$ by an interpolation argument. Consequently, the $\mathcal{H}$-valued function 
$\zeta \mapsto \Delta_R^{-i\zeta}\boldsymbol{\varrho}(G)\Omega$ is analytic in the open strip 
$\Gamma_\delta = \{\zeta = s + i\gamma : s \in \mathbb{R}\,, \ -\delta < \gamma < 1/2\}$. 
Since ${\rm supp}(G) \subset W^R$, there is an open subset $O_1 \subset W^R$ so that,
for some $\epsilon > 0$, $\Lambda^R_s(O_1) \subset {\rm supp}(G)^\perp$ whenever $|s| < \epsilon$. Thus, for any finite choice of $A_1,\ldots,A_N$ in ${\sf A}_\infty(O_1)$,
$B \in {\sf A}_\infty(W^R)$ 
and $|s'_j| < \epsilon$ ($j = 1,\ldots,N \, , \, N\in\mathbb{N}$), one has 
\begin{align} \label{eq:tube}
(B\Omega,A_1(s'_1) \cdots A_N(s'_N) & \boldsymbol{\varrho}(G)\Omega)
 -(\boldsymbol{\varrho}(G)B\Omega,A_1(s'_1) \cdots A_N(s'_N)\Omega)
= 0\,, \nonumber \\
& \quad \text{where} \ A_j(s'_j) = U_{s'_j} A_j U_{-s'_j} \,.
\end{align}
Using the notation $s_1 = s'_1$, $s_{N+1} = -s'_N$, $s_j = s_j' - s_{j -1}'$ ($j = 2,\ldots, N$), the previous equation can be rewritten as 
\begin{align} \label{eq:bound.value}
  & (B\Omega,U_{s_1}A_1U_{s_2}A_2 \cdots U_{s_{N}}A_NU_{s_{N+1}}\boldsymbol{\varrho}(G)\Omega) \\
  & - (\boldsymbol{\varrho}(G)B\Omega,U_{s_1}A_1U_{s_2}A_2 \cdots U_{s_{N}}A_N\Omega) = 0 \, , \nonumber
\end{align}
which holds for all $A_j \in {\sf A}_\infty(O_1)$, $B \in {\sf A}_\infty(W^R)$, and 
all $s_j$ in a sufficiently small open interval around 0. 
Now we argue that this equation 
extends from $s_j$-values in a small open interval around 0 to all $s_j \in \mathbb{R}$.
 To see this, pick any $j$ between $1$ and $N+1$.
Then, for any $s_k \in \mathbb{R}$,
\begin{align}
 \psi = (U_{s_{1}} A_{1} \cdots U_{s_{j-1}} A_{j-1})^*B\Omega
\end{align}
can be written as $\psi = Q\Omega$ with a $Q \in {\sf A}(W^R)$. Hence $\psi$ lies in the domain of $\Delta_R^{1/2}$. We therefore have 
\begin{align} 
 & (B\Omega,U_{s_1}A_1U_{s_2}A_2 \cdots U_{s_j}A_j \cdots U_{s_{N}}A_NU_{s_{N+1}}\boldsymbol{\varrho}(G)\Omega) \nonumber \\
 & = (\psi,U_{s_j}A_j \cdots U_{s_{N}}A_NU_{s_{N+1}}\boldsymbol{\varrho}(G)\Omega) \nonumber \\
 & = (({\bf 1} + \Delta_R^{1/2})\psi,({\bf 1} + \Delta_R^{1/2})^{-1}U_{s_j}A_j \cdots U_{s_{N}}A_NU_{s_{N+1}}\boldsymbol{\varrho}(G)\Omega) \, ,
\end{align}
and we observe that 
\begin{align}
 s_j \mapsto ({\bf 1} + \Delta_R^{1/2})^{-1} U_{s_j} 
\end{align}
is the strong boundary value, for negative imaginary part tending to 0, of the 
operator-valued function 
\begin{align}
 \zeta \mapsto ({\bf 1} + {\rm e}^{K_R/2})^{-1} {\rm e}^{i\zeta K_R / 2\pi} \, ,
\end{align}
which is strongly analytic in the open strip $\{-1/2<{\rm Im}(\zeta)<0\} \subset 
\mathbb{C}$.
The same conclusion applies when replacing $\psi$ by
\begin{align}
\tilde{\psi} = \tilde{Q}\Omega = (U_{s_{1}} A_{1} \cdots U_{s_{j-1}} A_{j-1})^*
\boldsymbol{\varrho}(G)B\Omega
\end{align}
where $\tilde{Q}$ is a closable operator defined on ${\sf A}_\infty(\mathbb{R}^{1+d})\Omega$ that is affiliated with ${\sf A}(W^R)$. 
Moreover, $\zeta \mapsto \Delta_R^{-i\zeta}\boldsymbol{\varrho}(G)\Omega$ is analytic in the open strip
$\Gamma_\delta$ around the real axis as argued before.
Therefore, we can conclude iteratively, starting with $j = 1$ and continuing up to $j = N+1$,
that equation \eqref{eq:bound.value} extends from $s_j$ taken 
from an open interval around 0 to all $s_j \in \mathbb{R}$ ($j = 1,\ldots, N+1$). This implies that \eqref{eq:tube} extends to all $s'_j \in \mathbb{R}$ ($j = 1,\ldots, N$) so that,
employing the timelike tube theorem, we obtain
\begin{align} \label{eq:commut}
[A,\boldsymbol{\varrho}(G)]\Omega = 0
\end{align}
for all $A \in {\sf A}_\infty(W^R)$. 
Using the arguments of \cite{FreHer}, one can check that if $\boldsymbol{\varrho}(G)$
is an operator of an $H$-bounded quantum field, then 
also every monomial $\boldsymbol{\varrho}(G)^q$, $q \in \mathbb{N}$,
is $H$-bounded, i.e.\ there is for every 
$q \in \mathbb{N}$ some $\ell(q) \in \mathbb{N}$ with the property that 
$\boldsymbol{\varrho}(G)^q(1 + H)^{-\ell(q)}$ is bounded. Therefore, \eqref{eq:commut}
implies that 
\begin{align}
 [A,\boldsymbol{\varrho}(G)^q]\Omega = 0 \quad \ \ (q \in \mathbb{N})
\end{align}
for all $A \in {\sf A}_\infty(W^R)$. This is shown by induction on $q$, noting that 
$[A,\boldsymbol{\varrho}(G)^q]\Omega$ = 0 implies 
$0 = B[A,\boldsymbol{\varrho}(G)^q]\Omega =[A,\boldsymbol{\varrho}(G)^q]B\Omega$ for 
all $B \in {\sf A}_\infty(W^L)$, where $W^L =\{(x^0,\ldots,x^d): x^1 < 0\,,\ |x^0| < |x^1|\} = {\rm int}(W^R)^\perp$ is the left wedge region, and using that $\boldsymbol{\varrho}(G)^q\Omega$ is in the $C^\infty$-domain of $H$ together with Lemma \ref{Le:Hn-approx}.
Then the previous equation implies for every vector 
$\chi \in {\rm dom}(\boldsymbol{\varrho}(G)^q)$,
\begin{align}
 (\chi,A\boldsymbol{\varrho}(G)^q\Omega) - (\boldsymbol{\varrho}(G)^q\chi,A \Omega) = 0
 \quad \ \ (q \in \mathbb{N}\,, \ A \in {\sf A}_\infty(W^R)).
\end{align}
For every finite interval spectral projector $E$ of $\boldsymbol{\varrho}(G)$ and $A \in {\sf A}(W^R)$ we can find a sequence in ${\sf A}_\infty(W^R)$ converging strongly to $EAE$ and thus we obtain from the last equation 
\begin{align} \label{eq:mon-proj}
 EAE\boldsymbol{\varrho}(G)^q - \boldsymbol{\varrho}(G)^q EAE = 0 \quad \ \ (q \in \mathbb{N}\,, \ A \in {\sf A}(W^R))\,,
\end{align} 
where we have used that the vector $\Omega$ is separating for ${\sf A}(W^R)$.
Assume that $E = E_n$ is the spectral projector corresponding to the 
spectral interval $[-n,n]$ of $\boldsymbol{\varrho}(G)$ for $n \in \mathbb{N}$ and let, for
given $a > 0$,
$P_{a,\nu}$ $(\nu \in \mathbb{N})$ be a sequence of polynomials approaching 
$\lambda \mapsto ({\bf 1} + a|\lambda|^2)^{-1}\lambda$ uniformly for $\lambda \in [-n,n]$.
Then we conclude from \eqref{eq:mon-proj} that for every $A \in {\sf A}(W^R)$,
\begin{align}
0 = \lim_{\nu \to \infty}\, [E_nAE_n,P_{a,\nu}(\boldsymbol{\varrho}(G))] =  \left[E_nAE_n,\frac{\boldsymbol{\varrho}(G)}{{\bf 1} + a|\boldsymbol{\varrho}(G)|^2}\right]\,.
\end{align}
This holds for arbitrary $n \in \mathbb{N}$ and $a >0$. Taking the limit $n \to \infty$, we find
that for any $a > 0$, 
\begin{align}
 \frac{\boldsymbol{\varrho}(G)}{{\bf 1} + a|\boldsymbol{\varrho}(G)|^2} \in {\sf A}(W^R) \cap {\sf A}(W^R)' = \mathbb{C}{\bf 1}\,.
\end{align}
Since 
\begin{align}
 \lim_{a \to 0} \left(\frac{\boldsymbol{\varrho}(G)}{{\bf 1} + a|\boldsymbol{\varrho}(G)|^2}\right) \psi = \boldsymbol{\varrho}(G)\psi 
\end{align}
holds for all $\psi \in {\rm dom}(\boldsymbol{\varrho}(G))$, the statement of the Proposition is hence proved.
\end{proof}

\end{document}